\documentclass[9pt,twocolumn,twoside]{pnas-new}

\templatetype{pnasresearcharticle} %

\setboolean{displaywatermark}{false}

\usepackage{amsmath}
\usepackage{amssymb}
\usepackage{enumitem}
\usepackage{tabularx}
\usepackage{xr}

\makeatletter
\newcommand*{\addFileDependency}[1]{%
    \typeout{(#1)}
    \@addtofilelist{#1}
    \IfFileExists{#1}{}{\typeout{No file #1.}}
}
\makeatother

\newtheorem{theorem}{Theorem}[section]
\newtheorem{lemma}[theorem]{Lemma}

\newenvironment{theorem*}[1][]{%
    
    \begin{theorem}
}{%
    \end{theorem}
    \addtocounter{theorem}{-1}
}

\newcommand{\qedsymbol}{\mbox{$\Box$}}
\newcommand{\qed}{\hfill\qedsymbol}

\newenvironment{proof}[1][Proof]{%
    \begin{trivlist}\item[\hskip\labelsep{\bfseries #1}]%
}{%
    \qed\end{trivlist}%
}

\newcommand{\amp}{\mathbin{\&}}
\renewcommand{\phi}{\varphi}
\renewcommand{\models}{\vDash}
\newcommand{\infers}{\vdash}
\newcommand{\alphaop}{\mathop{\alpha}}

\title{%
    Foundations of Reasoning with Uncertainty via Real-valued Logics
}

\author[a,1]{Ronald Fagin}
\author[a]{Ryan Riegel}
\author[a]{Alexander Gray}

\affil[a]{IBM Research}

\leadauthor{Fagin} 

\significancestatement{%
    This work is a step in the direction of explainable AI as it pertains to logical inference in neural networks.
    This may ultimately assist in preventing unfair, unwarranted, or otherwise undesirable outcomes from the application of modern AI methods.
}

\correspondingauthor{\textsuperscript{1}To whom correspondence should be addressed. E-mail: fagin@us.ibm.com}

\keywords{Keywords: real-valued logic $|$ strongly complete axiomatization} 

\begin{abstract}
    Real-valued logics underlie an increasing number of neuro-symbolic approaches, though typically their logical inference capabilities are characterized only qualitatively.
    We provide foundations for establishing the correctness and power of such systems.
    We give a sound and strongly complete axiomatization that can be parametrized to cover essentially every real-valued logic, including all the common fuzzy logics. 
Our class of sentences are very rich, and each describes a set of possible real values for a collection of formulas of the real-valued logic, including which combinations of real values are possible. 
Strong completeness 
allows us to derive exactly what information can be inferred about the combinations of real values of a collection of formulas given information about the combinations of real values of several other collections of formulas.
    We then extend the axiomatization to deal with weighted subformulas.
    Finally, we give a decision procedure based on linear programming for deciding, for certain real-valued logics and under certain natural assumptions, whether a set of our sentences logically implies another of our sentences.
\end{abstract}

\dates{This manuscript was compiled on \today}
\doi{\url{www.pnas.org/cgi/doi/10.1073/pnas.XXXXXXXXXX}}

\begin{document}

\maketitle
\thispagestyle{firststyle}
\ifthenelse{\boolean{shortarticle}}{\ifthenelse{\boolean{singlecolumn}}{\abscontentformatted}{\abscontent}}{}

\dropcap{R}ecent years have seen 
growing interest in approaches for
augmenting the capabilities of learning-based methods with those of reasoning, often broadly referred to as \emph{neuro-symbolic} (though they may not be strictly neural).
One of the key goals that neuro-symbolic approaches have at their root is logical inference, or reasoning.
However, the representation of classical 0-1 logic (where truth values of sentences are either 0, representing ``False'', or 1, representing ``True'') is
generally insufficient for this goal because representing uncertainty is essential to AI.
In order to merge with the ideas of neural learning, the truth values dealt with must be \emph{real-valued} (we shall take these to be real numbers in the interval \([0, 1]\), where intuitively, 0 means ``completely false'', and 1 means ``completely true''), whether the uncertainty semantics are those of probabilities, subjective beliefs, neural network activations, or fuzzy set memberships.
For this reason, many major approaches have turned to real-valued logics.
Logic tensor networks \cite{serafini2016logic} define a logical language on real-valued vectors corresponding to groundings of terms computed by a neural network, which can use any of the common real-valued logics (e.g., {\L}ukasiewicz, product, or G\"odel logic) for its connectives (e.g., \(\amp\), \(\veebar\), \(\neg\), and \(\rightarrow\)).
Probabilistic soft logics \cite{bach2017hinge} draw a correspondence of their approach based on Markov random fields (MRFs) with satisfiability of statements in a real-valued logic ({\L}ukasiewicz).
Tensorlog \cite{cohen2020tensorlog}, also based on MRFs but implemented in neural network frameworks, draws a correspondence of its approach to the use of connectives in a real-valued logic (product).
Logical neural networks (LNNs) \cite{riegel2020logical} draw a correspondence between activation functions of neural networks and connectives in real-valued logics.
To complete a full correspondence between neural networks and statements in real-valued logic, LNN defines a class of real-valued logics allowing weighted inputs, which represent the relative influence of subformulas.
While widely regarded as fundamental to the goal of AI, the reasoning capabilities of the aforementioned systems are typically made qualitatively versus quantitatively and mathematically.
While learning theory (roughly, what it means to perform learning) is well articulated and, for 0-1 logic, what it means to perform reasoning is well studied, reasoning is surprisingly not well formalized for real-valued logics.
As reasoning becomes an increasing goal of learning-based work, it becomes important to have a solid mathematical footing for it.

Formalization of the idea of real-valued logics is old and fundamental, going back to the origins of formal logic.
It is not well known that Boole himself invented a probabilistic logic in the 19th century \cite{boole1854investigation}, where formulas were assigned real values corresponding to probabilities.
It was used in AI to model the semantics of vague concepts for commonsense reasoning by expert systems \cite{zadeh1975fuzzy}.
Real-valued logic is used in linguistics to model certain natural language phenomena \cite{novak2008formal}, in hardware design to deal with multiple stable voltage levels \cite{epstein1993multiple}, and in databases to deal with queries that are composed of multiple graded notions, such as the redness of an object, that can range from 0 (``not at all red'') to 1 (``completely red'') \cite{fagin2003optimal}.
Despite all this, while definitions of logical correctness and power (generally, soundness and completeness) are well established and corresponding procedures for \emph{theorem proving} having those properties are abundant for classical logics, 
the equivalents for real-valued logics (where the values can take arbitrarily values between 0 and 1) are rather limited. 

\paragraph{This paper.}
In this paper, there are two levels of logic.
In the ``inner'' layer, we have formulas of the real-valued logic with its logical connectives.
In this inner layer, we shall use $\amp$ for ``and'' and $\veebar$ for  ``or'', as is done in \cite{hajek1998metamathematics}.
In the ``outer'' layer, we have a novel class of sentences about the inner real-valued logic (such as saying which truth values a given real-valued formula may attain).
For these sentences (which take on only the classical values 0 and 1 for False and True, respectively), we make use of the traditional logical symbols $\land$ for ``and'' and $\lor$ for ``or''.
We remark that, somewhat confusingly, the symbols $\land$ and $\lor$ are often used in real-valued logics for weaker versions of ``and'' and ``or'' than that given by $\amp$ and $\veebar$, which we do not have need to discuss in this paper.

Let us say that an axiomatization of a logic is \emph{strongly complete} if whenever $\Gamma$ is a finite set of sentences in the (outer) logic and $\gamma$ is a single sentence in the (outer) logic that is a logical consequence of $\Gamma$, then there is a proof of $\gamma$ from $\Gamma$ using the axiomatization.
An axiomatization is \emph{weakly complete} if this holds for $\Gamma = \emptyset$.
That is, an axiomatization is weakly complete if whenever $\gamma$ is a valid sentence (always true), then there is a proof of $\gamma$ using the axiomatization.
Early axiomatizations of real-valued logics in the literature were typically weakly complete, but now have been improved to strongly complete 
(see \cite{ hajek1998metamathematics} for examples).

We now explain why it is necessary to assume that $\Gamma$ is finite in the definition of strong completeness.
(In our explanation, we make use of ideas from \cite {hajek1998metamathematics}.)
Let us restrict to {\L}ukasiewicz logic. 
Let $A^k$ denote $A \amp A \amp \cdots \amp A$, where $A$ appears $k$ times.
Let $\Gamma$ be the infinite set of sentences $(B \rightarrow A^k; \{1\})$ for $k \geq 1$, along with $(A; [0,1))$ which says that the value of $A$ is less than 1. 
Let $\tau$ be $(B; \{0\})$.
We now show that $\Gamma$ logically implies $\tau$.
Assume that $\Gamma$ holds but $\tau$ does not hold. Therefore, the value of $A$ is less than 1.
It then follows from the definition of conjunction in
{\L}ukasiewicz logic that  there is $k$ such that $A^k$ has value 0 .
From $(B \rightarrow A^k; \{1\})$ this then implies that the value of $B$ is 0, so $\tau$ holds. 
Hence,  $\Gamma$ logically implies $\tau$.  Because our proofs are of finite length, there cannot be a proof of $\tau$ from $\Gamma$, since this would give a proof of $\tau$ from a finite subset of $\Gamma$, but no finite subset of $\Gamma$ logically implies $\tau$.
A natural open problem is whether we can allow $\Gamma$ to be infinite if we were to restrict our attention to G\"odel logic.

We introduce a rich, novel class of sentences.
\begin{enumerate}
\item
These sentences can say what the set $S$ of possible values is for a formula $\sigma$.
This set $S$ can be a singleton $\{s\}$ (meaning that the real value of $\sigma$ is $s$), or $S$ can be an interval, or a union of intervals, or in fact an arbitrary subset of $[0,1]$, e.g.  the set of rational numbers in $[0,1]$.
\item
Our sentences can give not only the possible real values of formulas, but the interactions between these values. 
For example, if $\sigma_1$ and $\sigma_2$ are formulas, our sentences can not only say what the possible real values are for each of $\sigma_1$ and $\sigma_2$, but also how they interact:  thus, if $s_1$ is the real value of $\sigma_1$ and $s_2$ is the real value of $\sigma_2$, then there is a sentence in our logic that says $(s_1,s_2)$ must lie in the set $S$ of ordered pairs, where $S$ is an arbitrary subset of $[0,1] \times [0,1]$.
We give a sound and strongly complete axiomatization for our sentences.
\item
Unlike the other axiomatizations mentioned earlier, our axiomatization can be extended to include the use of weights for subformulas (where, for example,in the formulas $A_1 \veebar A_2$, the subformula $A_1$ is considered twice as important as the subformula $A_2$). 
\item
A surprising feature of our axiomatization is that it is parametrized, so that this one axiomatization is sound and strongly complete for essentially \emph{every} real-valued logic, including those that do not obey the standard restrictions on fuzzy logics (such as conjunction being commutative). 
Previous axiomatizations in the literature had a separate set of axioms for each real-valued logic (for example, one of the axioms for {\L}ukasiewicz logic is $A \leftrightarrow \neg \neg A$, and one of the axioms for G\"odel logic is $A \leftrightarrow (A \amp A)$).
In the axiomatizations mentioned earlier, each connective has a fixed associated function that tells how to evaluate it.
For example, in in {\L}ukasiewicz logic,
the value of $A_1 \amp A_2$ is $f_{\amp}(a_1,a_2)$, where $a_1$ is the value of $A_1$ and $a_2$ is the value of $A_2$, and where 
$f_{\amp}(a_1,a_2) = \max \{0, a_1 + a_2 - 1\}$.
By contrast, for our axiomatization, $f_{\amp}$ is arbitrary, as long as it maps $[0,1]^2$ into $[0,1]$.

\end{enumerate}

From now on (except in the Section~\ref{sec:related}  on related work) we use ``complete'' to mean ``strongly complete''.

An especially useful real-valued logic for logical neural nets is {\L}ukasiewicz logic, for several reasons.
First, the \(\amp\), \(\veebar\), \(\neg\), and \(\rightarrow\) operators are essentially linear, in that if \(a_1\) is the truth value of a formula \(A_1\), and \(a_2\) is the truth value of a formula \(A_2\), then (a) \(A_1 \amp A_2\) has value \(\max \{0, a_1 + a_2 - 1\}\), (b) \(A_1 \veebar A_2\) has value \(\min\{1, a_1 + a_2\}\), (c) \(\neg A_1\) has value \(1 - a_1\), and (d) \(A_1 \rightarrow A_2\) is equivalent to \(\neg A_1 \veebar A_2\), and so has value \(\min\{1, 1 - a_1 + a_2\}\).\footnote{%
    The versions of \(\amp\) and \(\veebar\) we describe here are sometimes called in the literature \emph{strong conjunction} and \emph{strong disjunction}.
    \emph{Weak conjunction} is given by \(\min\{a_1, a_2\}\) and \emph{weak disjunction} is given by \(\max\{a_1, a_2\}\).
}\label{foot}
Second, it is easy to incorporate weights.
Thus, if \(w_1\) and \(w_2\) are nonnegative weights of \(A_1\) and \(A_2\), respectively, then we can take the weighted value of \(A_1 \veebar A_2\) to be %
\(\min\{1, w_1 s_1 + w_2 s_2\}\).

Throughout this paper, we take the domain of each function in the real-valued logic to be \([0, 1]\) or \([0, 1] \times [0, 1]\) and the range to be \([0, 1]\).
This is a common assumption for many real-valued logics, but all of our results go through with obvious modifications if the domains are \(D^k\) for possibly multiple choices of arity \(k\) and range \(D\), for arbitrary subsets \(D\) of the reals.
We note that real-valued logic can be viewed as a special case of multi-valued logic \cite{rescher1969many}, although in multi-valued logic there is typically a finite set of truth values, not necessarily linearly ordered.

We also provide a decision procedure for deciding, whether a set of our sentences logically implies another of our sentences, for certain real-valued logics, under certain natural assumptions.
We implemented the decision algorithm, dubbed \textbf{SoCRAtic} logic (for \textbf{So}und and \textbf{C}omplete \textbf{R}eal-valued \textbf{A}xioma\textbf{tic} logic), which we describe in detail and make available in source code.

Our sentences allow arbitrary real-valued logics, as does our sound and complete axiomatization, but our decision procedure depends heavily on the choice of real-valued logic, and in particular is tailored towards {\L}ukasiewicz and G\"odel logic.
This is because a key portion of our decision procedure is linear programming, and we depend on the essentially linear nature of {\L}ukasiewicz logic and the ease of dealing with min and max in G\"odel logic.

\paragraph{Overview.}
Until the final section, we do not allow weights.
In Section~\ref{sec:models}, we give our basic notions, including what a model is and what a sentence is.
In Section~\ref{sec:models}, we define our sentences to be of the form \((\sigma_1, \ldots, \sigma_k; S)\) where the \(\sigma_i\) are formulas, where \(S\) is a set of tuples \((s_1, \ldots, s_k)\), and where the sentence says that if the value of each \(\sigma_i\) is \(s_i\), for \(1 \leq i \leq k\), then \((s_1, \ldots, s_k) \in S\).
In Section~\ref{sec:axioms}, we give our (only) axiom and our inference rules.
In Section~\ref{sec:completeness}, we give our soundness and completeness theorem.
In Section~\ref{sec:Boolean}, we give a theorem that says that our sentences are closed under Boolean combinations.
This helps show robustness of our class of sentences.
In Section~\ref{sec:dimension} we discuss possible simplifications of our sentences.
In Section~\ref{sec:decision}, we give the decision algorithm.
In Section~\ref{sec:weights}, we show how to extend our methodology to incorporate weights.
In Section~\ref{sec:related}, we discuss related work. 
In the Conclusions, we review the implications for neuro-symbolic approaches.

\section{Models and sentences}\label{sec:models}

We assume a finite set of atomic propositions.
These can be thought of as the leaves of a neural net, i.e., nodes with no inputs from other neurons.
A model \(M\) is an assignment \(g^M\) of values to the atomic propositions.
Thus, \(M\) assigns a value \(g^M(A) \in [0, 1]\) to each atomic proposition \(A\).

Let \(F\) be the set of logical formulas over the atomic propositions, where we allow arbitrary finite sets of binary and unary connectives.
Typical binary connectives are conjunction (denoted by \(\amp\)), disjunction (denoted by \(\veebar\)), and implication (denoted by \(\rightarrow\)).
Typical unary connectives are negation (denoted by \(\neg\)) and a delta function (denoted by \(\bigtriangleup\)).
Sometimes \(\neg x\) is taken to be \(1 - x\), and \(\bigtriangleup x\) is taken to be defined by \(\bigtriangleup x = 1\) if \(x = 1\) and 0 otherwise.

When considering only formulas with value 1, as most other works do when giving sound axiomatizations of real-valued logics, the convention is to consider a sentence to be simply a member of \(F\).
What if we want to take into account values other than 1?
We take a sentence to be an expression of the form \((\sigma_1, \ldots, \sigma_k; S)\), where \(\sigma_1, \ldots, \sigma_k\) are in \(F\), and where \(S \subseteq [0, 1]^k\).
The intuition is that \((\sigma_1, \ldots, \sigma_k; S)\) says that if the value of each \(\sigma_i\) is \(s_i\), for \(1 \leq i \leq k\), then \((s_1, \ldots, s_k) \in S\).
We refer to our sentences as \emph{multi-dimensional sentences}, or for short \emph{MD-sentences}. 
For a fixed $k$, we refer to the MD-sentence
\((\sigma_1, \ldots, \sigma_k; S)\) as \emph{$k$-dimensional}.
The class of MD-sentences is robust.
In particular, Theorem~\ref{thm:robust} says that MD-sentences are closed under Boolean combinations.
We give a sound and (strongly) complete axiomatization, that is parameterized to deal with an arbitrary fixed real-valued logic.
This axiomatization allows us to derive exactly what information can be inferred about the combinations of values of a collection of formulas given information about the combinations of values of several other collections of formulas.

Note that we are not saying that the {\em logic} is multi-dimensional (which could mean that the values taken on by variables are vectors, not just numbers), but instead we are saying that the {\em sentences} in our "outer" logic are multi-dimensional. The "inner" logic we work with in this paper is real-valued, and real-valued logic has been heavily studied.  What is novel in our paper are our multi-dimensional sentences.

Note that the set $S$ in \((\sigma_1, \ldots, \sigma_k; S)\) can be undecidable, even if $k=1$ and every member of $S$ is a rational number.   For example, we could then take $S$  to  be the set of all numbers $1/k$, where $k$ is the G$\ddot{o}$del number of a halting Turing machine. But our decision procedures involve only special sets $S$.  Thus, we shall say in Section~\ref {sec:decision} that a sentence \((\sigma_1, \ldots, \sigma_k; S)\) is \emph{interval-based} if \(S\) is of the form \(S_1 \times \cdots \times S_k\), where each \(S_i\) is the union of a finite number of intervals with rational endpoints. And our decision procedure in that section deals with interval-based sentences.  However, our sound and complete axiomatization in Section~\ref {sec:completeness} makes no such assumptions about the sets $S$; in particular, the sets $S$ can be undecidable.

For convenience, we assume throughout that in the sentence \((\sigma_1, \ldots, \sigma_k; S)\), we have that \(\sigma_i\) and \(\sigma_j\) are different formulas if \(i \neq j\).
We refer to \(\sigma_1, \ldots, \sigma_k\) as the \emph{components} of \((\sigma_1, \ldots, \sigma_k; S)\), and $S$ as the \emph{information set} of \((\sigma_1, \ldots, \sigma_k; S)\).

Let \(\gamma\) be the sentence \((\sigma_1, \ldots, \sigma_k; S)\).
We now say what it means for a model \(M\) to satisfy \(\gamma\).
For \(1 \leq i \leq k\), let \(s_i\) be the value of the formula \(\sigma_i\) under the assignment of values to the atomic propositions given by the model \(M\).
We say that \(M\) \emph{satisfies} \(\gamma\) if \((s_1, \ldots, s_k) \in S\).
We then say that \(M\) is a \emph{model} of \(\gamma\), and we write \(M \models \gamma\).
Note that if \((\sigma_1, \ldots, \sigma_k; S)\) is satisfiable, that is, has a model, then \(S \neq \emptyset\).

\section{Axioms and inference rules}\label{sec:axioms}

We have only one axiom:
\vspace{-0.2in}
\begin{equation}\label{in01}
    (\sigma; [0, 1])
\end{equation}
\vspace{-0.2in}

Axiom~(\ref{in01}) guarantees that all values are in \([0, 1])\).

We now give our inference rules.

If \(\pi\) is a permutation of \(1, \ldots, k\), then:
\vspace{-0.1in}
\begin{equation}\label{perm}
    \mbox{From } (\sigma_1, \ldots, \sigma_k; S) 
    \mbox{ infer } (\sigma_{\pi(1)}, \ldots, \sigma_{\pi(k)}; S')
\end{equation}
\vspace{-0.2in}

where \(S' = \{(s_{\pi(1)}, \ldots, s_{\pi(k)})\colon (s_1, \ldots, s_k) \in S\}\).

Rule~(\ref{perm}) simply permutes the order of the components.

Our next inference rule is:
\vspace{-0.1in}
\begin{equation}\label{add}
    \mbox{From } (\sigma_1, \ldots, \sigma_k; S) 
    \mbox{ infer }  
    \end{equation}
     $(\sigma_1, \ldots, \sigma_k, \sigma_{k+1}, \ldots, \sigma_m; S \times [0, 1]^{m-k}).$

Rule~(\ref{add}) extends \((\sigma_1, \ldots, \sigma_k; S)\) to include \(\sigma_{k+1}, \ldots, \sigma_m\) with no nontrivial information being given about 
the new components.

Our next inference rule is:
\vspace{-0.1in}
\begin{equation}\label{intersect}
    \mbox{From } (\sigma_1, \ldots, \sigma_k; S_1) 
    \mbox{ and } (\sigma_1, \ldots, \sigma_k; S_2) 
    \mbox{ infer }
    \end{equation}
    $(\sigma_1, \ldots, \sigma_k; S_1 \cap S_2).$

Rule~(\ref{intersect}) enables us to join the information in 
\((\sigma_1, \ldots, \sigma_k; S_1)\) 
and \((\sigma_1, \ldots, \sigma_k; S_2)\).

Our next inference rule is the following (where \(0 < r < k\)):
\vspace{-0.1in}
\begin{equation}\label{proj}
    \mbox{From } (\sigma_1, \ldots, \sigma_k; S) 
    \mbox{ infer } (\sigma_1, \ldots, \sigma_{k-r}; S')
\end{equation}
where \(S' = \{(s_1, \ldots, s_{k-r})\colon (s_1, \ldots, s_k) \in S\}\).

Intuitively, \(S'\) is the projection of \(S\) onto the first \(k-r\) components.
Rule~(\ref{proj}) enables us to select information about \(\sigma_1, \ldots, \sigma_{k-r}\) from information about \(\sigma_1, \ldots, \sigma_k\).

Our next inference rule is:
\vspace{-0.1in}
\begin{equation}\label{superset}
    \mbox{From } (\sigma_1, \ldots, \sigma_k; S) 
    \mbox{ infer } (\sigma_1, \ldots, \sigma_k; S') 
    \mbox{ if } S \subseteq S'.
\end{equation}

Rule~(\ref{superset}) says that we can go from more information to less information.
The intuition is that smaller information sets are more informative.

We now give an inference rule that depends on the real-valued logic under consideration.
For each real-valued binary connective \(\alpha\), let \(f_{\alpha}(s_1, s_2)\) be the value of \(\sigma_1 \alphaop \sigma_2\) when the value of \(\sigma_1\) is \(s_1\) and the value of \(\sigma_2\) is \(s_2\).
For example, in G\"odel logic, \(f_{\amp}(s_1, s_2) = \min\{s_1, s_2\}\).
For each real-valued unary connective \(\rho\), let \(f_{\rho}(s)\) be the value of \(\rho\sigma\) when the value of \(\sigma\) is \(s\).
For example, in {\L}ukasiewicz logic, \(f_{\neg}(s) = 1 - s\).

In the sentence $(\sigma_1, \ldots, \sigma_k; S)$,
let us say that the tuple $(s_1, \ldots, s_k)$ in $S$ is \emph{good} if (a) \(s_m = f_{\alpha} (s_i, s_j)\) whenever \(\sigma_m\) is \(\sigma_i \alphaop \sigma_j\) and \(\alpha\) is a binary connective (such as \(\amp\)), and (b) \(s_j = f_{\rho}(s_i)\) whenever \(\sigma_j\) is \(\rho\sigma_i\) and \(\rho\) is a unary connective (such as \(\neg\)).
Note that being ``good'' is a local property of a tuple $s$ in $S$ (that is, it depends only on the tuple $s$ and not on the other tuples in $S$).
Of course, if the real-valued logic under consideration has higher-order connectives (ternary, etc.), then we would modify the definition of a good tuple in the obvious way.
For simplicity, we will assume throughout this paper that we are in the common case where the only connectives of the real-valued logic are unary and binary, although all of our results go through in the general case.

We then have the following inference rule:
\vspace{-0.1in}
\begin{equation}\label{operators}
    \mbox{From } (\sigma_1, \ldots, \sigma_k; S) 
    \mbox{ infer } (\sigma_1, \ldots, \sigma_k; S')
\end{equation}
when $S'$ is the set of good tuples of $S$.

Rule~(\ref{operators}) is our key rule of inference.
Let \(\gamma_1\) be the premise \((\sigma_1, \ldots, \sigma_k; S)\) and let \(\gamma_2\) be the conclusion \((\sigma_1, \ldots, \sigma_k; S')\) of Rule~(\ref{operators}).
As we shall discuss later, \(\gamma_1\) and \(\gamma_2\) are logically equivalent, and \(S'\) is as small as possible so that \(\gamma_1\) and \(\gamma_2\) are logically equivalent.

A simple example of a valid sentence is $(A,B,A \veebar B; S)$ where 
$S = \{(s_1,s_2,s_3) \colon  \ s_1 \in [0,1], s_2 \in [0,11], s_3 = f_{\veebar}(s_1,s_2)\}$.
This is derived from the valid sentence $(A,B,A \veebar B; [0,1]^3)$ by applying Rule~(\ref{operators})

Each of our rules is of the form ``From A infer B'' or ``From A infer B where ...''.
We refer to A as the \emph{premise} and B as the \emph{conclusion}.
We need the notion of a \emph{subformula} of a formula.
If \(\alpha\) is a binary connective, then the subformulas of \(\sigma_1 \alphaop \sigma_2\) are \(\sigma_1\) and \(\sigma_2\).
If \(\rho\) is a unary connective, then the subformula of \(\rho\sigma\) is \(\sigma\).

Let \(\Gamma\) be a set of MD-sentences.
We define the \emph{closure \(G\) of\/ \(\Gamma\) under subformulas} as follows.
For each sentence $(\gamma_1, \ldots, \gamma_m; S)$ in $\Gamma$, 
the set \(G\) contains $\gamma_1, \ldots, \gamma_m$, 
and for each formula $\gamma$ in $G$, the set $G$ contains every subformula of $\gamma$.

In particular, \(G\) contains every atomic proposition that appears inside the components of \(\Gamma\).

\section{Soundness and completeness}\label{sec:completeness}

Let \(\Gamma\) be a finite set of MD-sentences, and let \(\gamma\) be a single MD-sentence.
We write \(\Gamma \models \gamma\) if every model of \(\Gamma\) is a model of \(\gamma\).
We write \(\Gamma \infers \gamma\) if there is a proof of \(\gamma\) from \(\Gamma\), using our axiom system.
\emph{Soundness} says ``\(\Gamma \infers \gamma\) implies \(\Gamma \models \gamma\)''.
\emph{Completeness} says ``\(\Gamma \models \gamma\) implies \(\Gamma \infers \gamma\)''.
In this section, we shall prove that our axiom system is sound and complete for MD-sentences.
We define a special property of certain MD-sentences, that is used in a crucial manner in our completeness proof.
Let us say that a sentence \((\sigma_1, \ldots, \sigma_k; S)\) is \emph{minimized} if whenever \((s_1, \ldots, s_k) \in S\), then there is a model \(M\) of \((\sigma_1, \ldots, \sigma_k; S)\) such that for \(1 \leq i \leq k\), the value of \(\sigma_i\) in \(M\) is \(s_i\).
Thus, \((s_1, \ldots, s_k) \in S\) \emph{if and only if} there is a model \(M\) of \((\sigma_1, \ldots, \sigma_k; S)\) such that for \(1 \leq i \leq k\), the value of \(\sigma_i\) in \(M\) is \(s_i\).
We use the word ``minimized'', since intuitively, \(S\) is as small as possible.

Our proof of completeness makes use of the following lemmas.

\begin{lemma}\label{lem:min}
    Let \((\sigma_1, \ldots, \sigma_k; S)\) be the premise of Rule~(\ref{operators}).
    Assume that \(G = \{\sigma_1, \ldots, \sigma_k\}\) is closed under subformulas (so that in particular, every atomic proposition that appears in \(G\) is a member of \(G\)).
    Then the conclusion \((\sigma_1, \ldots, \sigma_k; S')\) of Rule~(\ref{operators}) is minimized.
\end{lemma}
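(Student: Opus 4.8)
The plan is to show that the set $S'$ produced by Rule~\eqref{operators} has the defining property of being minimized: for every tuple $(s_1, \ldots, s_k) \in S'$ there exists a model $M$ of $(\sigma_1, \ldots, \sigma_k, S')$ assigning value $s_i$ to $\sigma_i$ for each $i$. Since $G = \{\sigma_1, \ldots, \sigma_k\}$ is closed under subformulas, it contains every atomic proposition occurring in any $\sigma_i$; call these atomic propositions $A_1, \ldots, A_n$ (they are among the $\sigma$'s, so each $A_j$ equals some $\sigma_{i(j)}$). Given a target tuple $(s_1, \ldots, s_k) \in S'$, I would \emph{define} the model $M$ by setting $g^M(A_j)$ to be the component $s_{i(j)}$ of the tuple corresponding to that atomic proposition. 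The heart of the argument is then to verify that, under this assignment, the value of $\sigma_i$ in $M$ is exactly $s_i$ for \emph{every} $i$, not just for the atomic $\sigma_i$'s.

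The key step is this verification, which I would carry out by structural induction on the formula $\sigma_i$ (using that $G$ is closed under subformulas, so every subformula of $\sigma_i$ is itself some $\sigma_\ell \in G$). The base case — $\sigma_i$ atomic — holds by the definition of $M$. For the inductive step, suppose $\sigma_m$ is $\sigma_i \alphaop \sigma_j$ for a binary connective $\alpha$ (the unary case is analogous). By the induction hypothesis, the value of $\sigma_i$ in $M$ is $s_i$ and the value of $\sigma_j$ in $M$ is $s_j$, so the value of $\sigma_m$ in $M$ is $f_\alpha(s_i, s_j)$. But the definition of $S'$ in Rule~\eqref{operators} requires precisely that $f_\alpha(s_i, s_j) = s_m$ for every tuple in $S'$, and our target tuple lies in $S'$; hence the value of $\sigma_m$ in $M$ is $s_m$, as needed. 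This completes the induction, so the value of $\sigma_i$ in $M$ equals $s_i$ for all $i$.

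It remains to observe that this $M$ is actually a model of $(\sigma_1, \ldots, \sigma_k, S')$: since the tuple of values of $(\sigma_1, \ldots, \sigma_k)$ in $M$ is exactly $(s_1, \ldots, s_k)$, which by assumption belongs to $S'$, the model $M$ satisfies the sentence by definition of satisfaction. Thus every tuple in $S'$ is witnessed by a model with the right values, which is exactly the definition of minimized.

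The main obstacle I anticipate is bookkeeping rather than conceptual: one must be careful that the induction is well-founded, which relies essentially on the closure-under-subformulas hypothesis — without it, a subformula of some $\sigma_i$ might not appear among the listed components and the constraints imposed on $S'$ by Rule~\eqref{operators} would not ``reach'' it, breaking the inductive step. One should also note that the assignment of $g^M$ to atomic propositions is well-defined because each atomic proposition appears as exactly one component (the paper assumes the components $\sigma_1, \ldots, \sigma_k$ are pairwise distinct), so there is no conflict in reading off its value from the tuple. Everything else is a routine unwinding of the definitions of $f_\alpha$, $f_\rho$, $S'$, and satisfaction.
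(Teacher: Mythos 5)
Your proposal is correct and follows essentially the same route as the paper's proof: define the model from the tuple's values on the atomic propositions (which are among the components because $G$ is closed under subformulas) and then argue by structural induction, using the constraints defining $S'$, that every component $\sigma_i$ receives value $s_i$. The paper states this induction in one line as ``a simple induction on the structure of formulas,'' whereas you spell out the inductive step explicitly; the content is the same.
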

\begin{proof}
    Let \(\phi\) be the conclusion \((\sigma_1, \ldots, \sigma_k; S')\) of Rule~(\ref{operators}).
    Assume that \((s_1, \ldots, s_k) \in S'\).
    To prove that \(\phi\) is minimized, we must show that there is a model \(M\) of \(\phi\) such that for \(1 \leq i \leq k\), the value of \(\sigma_i\) in \(M\) is \(s_i\).
    From the assignment of values to the atomic propositions, as specified by a portion of \((s_1, \ldots, s_k)\), we obtain our model \(M\).
    For this model \(M\), the value of each \(\sigma_i\) is exactly that specified by \((s_1, \ldots, s_k)\), as we can see by a simple induction on the structure of formulas.
    Hence, \(\phi\) is minimized.
\end{proof}

\begin{lemma}\label{lem:equiv}
    For each of Rules~(\ref{perm}), (\ref{add}), (\ref{intersect}), and (\ref{operators}), the premise and the conclusion are logically equivalent.
\end{lemma}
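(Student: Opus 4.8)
The plan is to treat each of the four rules in turn, and in every case to fix an arbitrary model $M$, write $s_i$ for the value of $\sigma_i$ in $M$, and verify that $M$ satisfies the premise exactly when $M$ satisfies the conclusion; since logical equivalence of two sentences means precisely that they have the same satisfying models, this establishes the lemma. The single recurring observation that drives all four arguments is the definition of satisfaction already set up in Section~\ref{sec:models}: $M \models (\tau_1,\ldots,\tau_j,T)$ holds iff the tuple $(t_1,\ldots,t_j)$ of values of $\tau_1,\ldots,\tau_j$ in $M$ belongs to $T$. Thus each equivalence collapses to an elementary set-membership statement about the numbers $s_i$.

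First I would dispose of Rules~\eqref{perm}, \eqref{add}, and \eqref{intersect}, which are purely combinatorial. For~\eqref{perm}, the map $(t_1,\ldots,t_k)\mapsto(t_{\pi(1)},\ldots,t_{\pi(k)})$ is a bijection of $[0,1]^k$, so $(s_{\pi(1)},\ldots,s_{\pi(k)})\in S'$ iff $(s_1,\ldots,s_k)\in S$, which is exactly the desired equivalence. For~\eqref{add}, $(s_1,\ldots,s_m)\in S\times[0,1]^{m-k}$ is equivalent to the conjunction of $(s_1,\ldots,s_k)\in S$ and $(s_{k+1},\ldots,s_m)\in[0,1]^{m-k}$, and the second conjunct always holds since every truth value lies in $[0,1]$ (this is the content of Axiom~\eqref{in01}); hence the conclusion is satisfied iff the premise is. For~\eqref{intersect}, interpreting the ``premise'' as the two sentences together, $M$ satisfies both iff $(s_1,\ldots,s_k)$ lies in both $S_1$ and $S_2$, i.e.\ in $S_1\cap S_2$, which is satisfaction of the conclusion.

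The only rule with real content is~\eqref{operators}. Since $S'\subseteq S$ by construction, satisfaction of the conclusion $(\sigma_1,\ldots,\sigma_k,S')$ trivially implies satisfaction of the premise $(\sigma_1,\ldots,\sigma_k,S)$. For the converse, suppose $(s_1,\ldots,s_k)\in S$; I must check that $(s_1,\ldots,s_k)$ also meets the defining constraints of $S'$, namely $f_{\alpha}(s_i,s_j)=s_m$ whenever $\sigma_m$ is $\sigma_i\alphaop\sigma_j$, and $f_{\rho}(s_i)=s_j$ whenever $\sigma_j$ is $\rho\sigma_i$. But these identities hold in \emph{every} model: by the very definition of the semantics of a connective, the value in $M$ of $\sigma_i\alphaop\sigma_j$ is $f_{\alpha}$ applied to the values of $\sigma_i$ and $\sigma_j$, so $s_m=f_{\alpha}(s_i,s_j)$, and similarly $s_j=f_{\rho}(s_i)$. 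Hence $(s_1,\ldots,s_k)\in S'$ and $M\models(\sigma_1,\ldots,\sigma_k,S')$, completing the equivalence.

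I expect the one place that needs care --- and hence the ``main obstacle,'' though it is a small one --- to be making the~\eqref{operators} case fully rigorous: unlike in Lemma~\ref{lem:min}, here the components $\sigma_1,\ldots,\sigma_k$ are \emph{not} assumed closed under subformulas, so the argument must rely solely on the compositional evaluation of connective values in a given model and not on any structural assumption about which subformulas appear among the $\sigma_i$. Once that point is made explicit, all four cases are immediate and the lemma follows.
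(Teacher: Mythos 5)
Your proof is correct and takes essentially the same approach as the paper: a rule-by-rule semantic verification, with the forward direction of Rule~\eqref{operators} resting on the compositional evaluation of connectives and the reverse direction on \(S' \subseteq S\). The only organizational difference is that the paper obtains the ``premise implies conclusion'' direction by citing the soundness of the rules (proved separately) and the reverse direction for Rules~\eqref{intersect} and \eqref{operators} via the soundness of Rule~\eqref{superset}, whereas you inline those same arguments directly.
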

\begin{proof}
    The equivalence of the premise and conclusion of Rule~(\ref{perm}) is clear.
    For Rules~(\ref{add}), (\ref{intersect}), and (\ref{operators}),
    the fact that the premise logically implies the conclusion follows from soundness of the rules, which we shall show shortly.
    We now show that for Rules~(\ref{add}), (\ref{intersect}), and (\ref{operators}), the conclusion logically implies the premise.
    For Rule~(\ref{add}), we see that if \((s_1, \ldots, s_m) \in S \times [0, 1]^{m-k}\), then \((s_1, \ldots, s_k) \in S\).
    Hence, the conclusion of Rule~(\ref{add}) logically implies the premise of Rule~(\ref{add}).
    For Rules~(\ref{intersect}) and (\ref{operators}), the conclusion logically implies the premise because of the soundness of Rule~(\ref{superset}).
\end{proof}

\begin{lemma}\label{lem:moremin}
    Minimization is preserved by Rules~\ref{perm} and (\ref{intersect}),
    in the following sense.
    \begin{enumerate}
        \item If the premise of Rule~(\ref{perm}) is minimized, then so is the conclusion.
        \item If the premises \((\sigma_1, \ldots, \sigma_k; S_1)\) and \((\sigma_1, \ldots, \sigma_k; S_2)\) of Rule~(\ref{intersect}) are minimized, then so is the conclusion \((\sigma_1, \ldots, \sigma_k; S_1 \cap S_2)\).
    \end{enumerate}
\end{lemma}
\begin{proof}
    Part (1) is immediate, since the premise and conclusion have exactly the same information.
    
    For part (2), assume that \((\sigma_1, \ldots, \sigma_k; S_1)\) and \((\sigma_1, \ldots, \sigma_k; S_2)\) are minimized.
    To show that \((\sigma_1, \ldots, \sigma_k; S_1 \cap S_2)\) is minimized, we must show that if \((s_1, \ldots, s_k) \in S_1 \cap S_2\), then there is a model \(M\) of \((\sigma_1, \ldots, \sigma_k; S_1 \cap S_2)\) such that for \(1 \leq i \leq k\), the value of \(\sigma_i\) in \(M\) is \(s_i\).
    Assume that \((s_1, \ldots, s_k) \in S_1 \cap S_2\).
    Hence, \((s_1, \ldots, s_k) \in S_1\).
    Since \((\sigma_1, \ldots, \sigma_k; S_1)\) is minimized, we obtain the desired model \(M\).
\end{proof}

\begin{theorem}\label{thm:soundcomplete}
    Our axiom system is sound and complete for MD-sentences.
\end{theorem}
\begin{proof}
We begin by proving soundness.
We say that an axiom is sound if it is true in every model.
We say that an inference rule is sound if every model that satisfies the premise also satisfies the conclusion.
To prove soundness of our axiom system, it is sufficient to show that our axiom is sound and that each of our rules is sound.

Axiom~(\ref{in01}) is sound, since every real-valued logic formula has a value in \([0, 1]\).

Rule~(\ref{perm}) is sound, since the premise
and conclusion
encode exactly the same information.

Rule~(\ref{add}) is sound for the following reason.
Let \(M\) be a model, and let \(s_1, \ldots, s_m\) be the values of \(\sigma_1, \ldots, \sigma_m\), respectively, in \(M\).
If \(M\) satisfies the premise, then \((s_1, \ldots, s_k) \in S\).
This implies that \((s_1, \ldots, s_m ) \in S \times [0, 1]^{m-k})\) and so \(M\) satisfies the conclusion.

Rule~(\ref{intersect}) is sound for the following reason.
Let \(M\) be a model, and let \(s_1, \ldots, s_k\) be the values of \(\sigma_1, \ldots, \sigma_k\), respectively, in \(M\).
If \(M\) satisfies the premise, then \((s_1, \ldots, s_k) \in S_1\) and \((s_1, \ldots, s_k) \in S_2\).
Therefore, \((s_1, \ldots, s_k) \in S_1 \cap S_2\), and so \(M\) satisfies the conclusion.

Rule~(\ref{proj}) is sound for the following reason.
Let \(M\) be a model, and let \(s_1, \ldots, s_k\) be the values of \(\sigma_1, \ldots, \sigma_k\), respectively, in \(M\).
If \(M\) satisfies the premise, then \((s_1, \ldots, s_k) \in S\).
Therefore \((s_1, \ldots, s_{k-r}) \in S'\), and so \(M\) satisfies the conclusion.

Rule~(\ref{superset}) is sound for the following reason.
Let \(M\) be a model, and let \(s_1, \ldots, s_k\) be the values of \(\sigma_1, \ldots, \sigma_k\), respectively, in \(M\).
If \(M\) satisfies the premise, then \((s_1, \ldots, s_k) \in S\).
Therefore, \((s_1, \ldots, s_k) \in S'\), and so \(M\) satisfies the conclusion.

Rule~(\ref{operators}) is sound for the following reason.
Let \(M\) be a model, and let \(s_1, \ldots, s_k\) be the values of \(\sigma_1, \ldots, \sigma_k\), respectively, in \(M\).
If \(M\) satisfies the premise, then \((s_1, \ldots, s_k) \in S\).
In our real-valued logic, we have that (a) \(f_{\alpha}(s_i, s_j) = s_m\) when \(\sigma_m\) is \(\sigma_i \alphaop \sigma_j\) and \(\alpha\) is a binary connective (such as \(\amp\)), and (b) \(f_{\rho}(s_i) = s_j\) when \(\sigma_j\) is \(\rho\sigma_i\) and \(\rho\) is a unary connective (such as \(\neg\)).
Then \((s_1, \ldots, s_k) \in S'\), and \(M\) satisfies the conclusion.

This completes the proof of soundness.
We now prove completeness.
Assume that \(\Gamma \models \gamma\); we must show that \(\Gamma \infers \gamma\).
We can assume without loss of generality that \(\Gamma\) is nonempty, because if \(\Gamma\) is empty, we replace it by a singleton set containing an instance of our Axiom~(\ref{in01}).

Let \(\Gamma = \{\gamma_1, \ldots, \gamma_n\}\).
For \(1 \leq i \leq n\), assume that \(\gamma_i\) is \((\sigma^i_1, \ldots, \sigma^i_{k_i}; S_i)\), and let \(\Gamma_i = \{\sigma^i_1, \ldots, \sigma^i_{k_i}\}\).
Assume that \(\gamma\) is \((\sigma^0_1, \ldots, \sigma^0_{k_0}; S_0\)), and let \(\Gamma_0 = \{\sigma^0_1, \ldots, \sigma^0_{k_0}\}\).
Let \(G\) be the closure of \(\Gamma_0 \cup \Gamma_1 \cup \cdots \cup \Gamma_n\) under subformulas.

For each \(i\) with \(1 \leq i \leq n\), let \(H_i\) be the set difference \(G \setminus \Gamma_i\).
Let \(r_i = |H_i|\).
Let \(H_i = \{\tau^i_1, \ldots \tau^i_{r_i}\}\).
By applying Rule~(\ref{add}), we prove from \(\gamma_i\) the sentence \((\sigma^i_1, \ldots, \sigma^i_{k_i}, \tau^i_1, \ldots, \tau^i_{r_i}; S_i \times [0, 1]^{r_i})\).
Let \(\psi_i\) be the conclusion of Rule~(\ref{operators}) when the premise is \((\sigma^i_1, \ldots, \sigma^i_{k_i}, \tau^i_1, \ldots, \tau^i_{r_i}; S_i \times [0, 1]^{r_i})\).

Let \(\delta_1, \ldots, \delta_p\) be a fixed ordering of the members of \(G\).
Since the set of components of each \(\psi_i\) is \(G\), we can use Rule~(\ref{perm}) to rewrite \(\psi_i\) as a sentence \((\delta_1, \ldots, \delta_p; T_i)\).
Let us call this sentence \(\phi_i\).

Also, since the only rules used in proving \(\phi_i\) from \(\gamma_i\) are Rules~(\ref{perm}), (\ref{add}), and (\ref{operators}), it follows from Lemma~\ref{lem:equiv} that \(\gamma_i\) and \(\phi_i\) are logically equivalent.

We now make use of the notion of minimization.
Let \(T = T_1 \cap \cdots \cap T_n\).
Define \(\phi\) to be the sentence \((\delta_1, \ldots, \delta_p; T)\).
It follows from Lemma~\ref{lem:min} that each \(\psi_i\) is minimized.
So by Lemma~\ref{lem:moremin}, each \(\phi_i\) is minimized.
By Lemma~\ref{lem:moremin} again, \(\phi\) is minimized.

The sentence \(\phi\) was obtained from the sentences \(\phi_i\) by applying Rule~(\ref{intersect}) \(n-1\) times.
It follows from Lemma~\ref{lem:equiv} that \(\phi\) is equivalent to \(\{\phi_1, \ldots, \phi_n\}\).
Since we also showed that \(\gamma_i\) is logically equivalent to \(\phi_i\) for \(1 \leq i \leq n\), it follows that \(\phi\) is logically equivalent to \(\Gamma\).
Hence, since \(\Gamma \models \gamma\), it follows that \(\{\phi\} \models \gamma\).
It also follows that to prove that \(\Gamma \infers \gamma\), we need only show that there is a proof of \(\gamma\) from \(\phi\).

Recall that \(\gamma\) is \((\sigma^0_1, \ldots, \sigma^0_{k_0}; S_0\)), and \(\phi\) is \((\delta_1, \ldots, \delta_p; T)\).
By applying Rule~(\ref{perm}), we can re-order the components of \(\phi\) so that the components start with \(\sigma^0_1, \ldots, \sigma^0_{k_0}\).
We thereby obtain from \(\phi\) a sentence \((\sigma^0_1, \ldots, \sigma^0_{k_0}, \ldots; T')\), which we denote by \(\phi'\).
By Lemma~\ref{lem:equiv} we know that \(\phi\) and \(\phi'\) are logically equivalent.
So \(\{\phi'\} \models \gamma\).
Since \(\phi\) is minimized, so is \(\phi'\), by Lemma~\ref{lem:moremin}.
By applying Rule~(\ref{proj}), we obtain from \(\phi'\) a sentence \((\sigma^0_1, \ldots, \sigma^0_{k_0}; T'')\), which we denote by \(\phi''\).

We now show that \(T'' \subseteq S_0\).
This is sufficient to complete the proof of completeness, since then we can use Rule~(\ref{superset}) to prove \(\gamma\).
If \(T''\) is empty, we are done.
So assume that \((s_1, \ldots, s_{k_0}) \in T''\); we must show that \((s_1, \ldots, s_{k_0}) \in S_0\).

Since \((s_1, \ldots, s_{k_0}) \in T''\), it follows
that there is an extension \((s_1, \ldots, s_{k_0}, \ldots, s_p)\) in \(T'\).
Since \(\phi'\) is minimized, there is a model \(M\) of \(\phi'\) such that the value of \(\sigma^0_i\) is \(s_i\), for \(1 \leq i \leq k_0\).
Since \(\{\phi'\} \models \gamma\), it follows that \(M\) is a model of \(\gamma\).
By definition of what it means for \(M\) to be a model of \(\gamma\), it follows that \((s_1, \ldots, s_{k_0}) \in S_0\), as desired.

This completes the proof of soundness and completeness.
\end{proof}

\section{Closure of MD-sentences under Boolean combinations}\label{sec:Boolean}

Our next theorem implies that MD-sentences are robust, in that they are closed under Boolean combinations.
Of course, since we are dealing with sentences (which take only the values True and False) in our "outer" logic, we use the standard Boolean connectives.

We begin with a useful lemma that we shall also use later.

\begin{lemma}\label{lem:neg}
    The (standard logical) negation of the sentence \((\sigma_1, \ldots, \sigma_k; S)\) is \((\sigma_1, \ldots, \sigma_k; \tilde{S})\) where \(\tilde{S}\) is the set difference \([0, 1]^k \setminus S\).
\end{lemma}
\begin{proof}
    We need only show that if \(M\) is a model, then \(M \models (\sigma_1, \ldots, \sigma_k; \tilde{S})\) if and only if \(M \not\models (\sigma_1, \ldots, \sigma_k; S)\).
    Let \(s_i\) be the value of \(\sigma_i\) in \(M\), for \(1 \leq i \leq k\).
    If \(M \models (\sigma_1, \ldots, \sigma_k; \tilde{S})\), then \((s_1, \ldots, s_k) \in \tilde{S}\), and so \((s_1, \ldots, s_k) \notin S\).
    Hence, \(M \not\models (\sigma_1, \ldots, \sigma_k; S)\).
    Conversely, if \(M \not\models (\sigma_1, \ldots, \sigma_k; S)\), then \((s_1, \ldots, s_k) \notin S\), and so \((s_1, \ldots, s_k) \in \tilde{S}\).
    Hence, \(M \models (\sigma_1, \ldots, \sigma_k; \tilde{S})\).
\end{proof}

\begin{theorem}\label{thm:robust}
    MD-sentences 
    are closed under Boolean combinations $\land$, $\lor$, and $\neg$.
\end{theorem}
\begin{proof}
Let \(\gamma_1\) and \(\gamma_2\) be MD-sentences.
Assume that \(\gamma_1\) is \((\sigma^1_1, \ldots, \sigma^1_m; S_1)\), and that \(\gamma_2\) is \((\sigma^2_1, \ldots, \sigma^2_n; S_2)\).
As in the proof of Theorem~\ref{thm:soundcomplete}, let \(G\) be the closure of \(\{\sigma^1_1, \ldots, \sigma^1_m, \sigma^2_1, \ldots, \sigma^2_n\}\) under subformulas.
Assume that \(G = \{\delta_1, \ldots, \delta_p\}\).
As in the proof of Theorem~\ref{thm:soundcomplete}, we know that for \(i = 1\) and \(i = 2\), there is \(T_i\) such that \(\gamma_i\) is equivalent to a sentence \((\delta_1, \ldots, \delta_p; T_i)\).
The conjunction \(\gamma_1 \land \gamma_2\) is equivalent to \((\delta_1, \ldots, \delta_p; T_1 \cap T_2)\).
The disjunction \(\gamma_1 \lor \gamma_2\) is equivalent to \((\delta_1, \ldots, \delta_p; T_1 \cup T_2)\).
And by Lemma~\ref{lem:neg}, the negation \(\neg\gamma_1\) is equivalent to \((\delta_1, \ldots, \delta_p; \tilde{T_i})\), where \(\tilde{T_1}\) is the set difference \([0, 1]^p \setminus T_1\).
\end{proof}

\section{Lowering the dimensionality}\label{sec:dimension}

It is natural to ask whether
there is a $(k+1)$-dimensional MD-sentence that in {\L}ukasiewicz or G\"odel logic is not equivalent to any $k$-dimensional MD-sentence.
For the special case $k=1$, the next theorem gives an answer.  We shall shortly state the more general case and generalizations of it as open problems.

\begin{theorem}\label{thm:notequiv}
There is a 2-dimensional MD-sentence that is not equivalent (in either {\L}ukasiewicz or G\"odel logic) to a 1-dimensional MD-sentence.
\end{theorem}
\begin{proof}
 Let $\sigma$ be the 2-dimensional MD-sentence $(A_1,A_2;S)$ where $S = \{(a_1,a_2): a_1^2 = a_2\}$.
 We now show that $\sigma$ is not equivalent to a 1-dimensional MD-sentence.
 If $\phi$ is a propositional formula involving only $A_1$ and $A_2$, then it is easy to see (by induction on the structure of formulas) that for {\L}ukasiewicz or G\"odel logic, $\phi$ defines a piecewise linear function $f_\phi$, in the sense that
 the 1-dimensional MD-sentence $(\phi;S')$ says
 that if $a_1$ is the value of $A_1$ and $a_2$ is the value of $A_2$, then $f_\phi(a_1,a_2) \in S'$.
 Since there is no such piecewise linear function $f_\phi$ and set $S'$  for our sentence $\sigma$, the result holds.
 \end{proof}
 
 The next theorem does not depend on restricting to
 {\L}ukasiewicz or G\"odel logic.
\begin{theorem}\label{thm:onlyk}
Every 
finite set of MD-sentences of arbitrary dimensions 
that involve only the $k$ predicate symbols $A_1, \ldots, A_k$ is equivalent to a single $k$-dimensional MD sentence $(A_1, \ldots, A_k; S)$.  (The set $S$ depends on the real-valued logic being considered.)
\end{theorem}
\begin{proof}
Let $\Gamma$ be a finite set of MD-sentences.
We can view $\Gamma$ as a conjunction of MD-sentences, so by
Theorem~\ref{thm:robust}, $\Gamma$ is equivalent to a single MD-entence $\gamma$.
As in the proof of completeness,  
by closing under subformulas, applying Rule~(\ref{operators}), and reodering by applying Rules~(\ref{perm}),
we obtain an MD-sentnece $(A_1, \ldots, A_k, \phi_1, \ldots, \phi_r';S')$ that is equivalent to $\gamma$.
Since the tuples in $S'$ are good tuples, this is equivalent to the sentence $(A_1, \ldots, A_k;S)$ where $S = \{(s_1, \ldots, s_k) : (s_1, \ldots s_k, s'_1, \ldots s'_r) \in S'\}$.
\end{proof}

{\bf Open problems:}
For each $k$ with $k \geq 2$, does there exist a $(k+1)$-dimensional MD-sentence 
that in {\L}ukasiewicz or G\"odel logic is not equivalent to a $k$-dimensional MD-sentence?
And for $k \geq 1$, how about there being a 
$(k+1)$-dimensional MD-sentence not equivalent to 
a Boolean combination of 1-dimensional MD-sentences, or even to a Boolean combination of $k$-dimensional MD-sentences?

\section{SoCRAtic logic: A decision procedure}\label{sec:decision}

Given a finite set \(\Gamma\) of MD-sentences, and a single MD-sentence \(\gamma\), Theorem~\ref{thm:soundcomplete} says that \(\Gamma \models \gamma\) if and only if \(\Gamma \infers \gamma\).
As we shall show, under natural assumptions there is an algorithm for deciding if \(\Gamma \models \gamma\).
We call this algorithm a \emph{decision procedure}.
If the information sets $S$ all have s simple structure  and the size of $\Gamma$ is treated as a constant, than the algorithm runs in polynomial time.

It is natural to wonder whether we can simply use our complete axiomatization to derive a decision procedure.
The usual answer is that it is not clear in what order to apply the rules of inference.
In our proof of completeness, the rules of inference are applied in a specific order, so that is not an issue here.
Rather, the problem is that in applying Rule~(\ref{operators}), there is no easy way to derive \(S'\) from \(S\), even if \(S\) is fairly simple.
In fact, we now show that even deciding if \(S'\) is nonempty is NP-hard.
Let \(\phi\) be an instance of the NP-hard problem 3SAT.
Thus, \(\phi\) is of the form \((B^1_1 \veebar B^1_2 \veebar B^1_3) \amp \cdots \amp (B^r_1 \veebar B^r_2 \veebar B^r_3)\), where each \(B^i_j\) is a literal (an atomic proposition or its negation).
Assume that the atomic propositions that appear in \(\phi\) are \(A_1, \ldots, A_k\).
Let \(\psi\) be the sentence
\[
    (A_1, \ldots, A_k, \neg A_1, \ldots, \neg A_k, \tau_1, \ldots, \tau_r, \tau_1 \veebar B^1_3, \ldots, \tau_r \veebar B^r_3; S),
\]
where \(\tau_i\) is \(B^i_1 \veebar B^i_2\), for \(1 \leq i \leq r\), and where \(S = \{0, 1\}^{2k+r} \times \{1\}^r\).
Assume that we apply Rule~(\ref{operators}) where the premise is \(\psi\), and the conclusion is
\[
    (A_1, \ldots, A_k, \neg A_1, \ldots, \neg A_k, \tau_1, \ldots, \tau_r, \tau_1 \veebar B^1_3, \ldots, \tau_r \veebar B^r_3; S').
\]
We call this sentence \(\psi'\).
It follows easily from our construction of \(\psi\) that the 3SAT problem \(\phi\) is satisfiable if and only if \(\psi\) is satisfiable.
Now \(\psi\) and \(\psi'\) are logically equivalent, by Lemma~\ref{lem:equiv}.
So the 3SAT problem \(\phi\) is satisfiable if and only if \(\psi'\) is satisfiable.
By Lemma~\ref{lem:min}, we know that \(\psi'\) is minimized.
Hence, if \(S'\) is nonempty, there is a model of \(\psi'\), by the definition of minimization.
And if \(S'\) is empty, then by the definition of a model of a sentence, there is no model of \(\psi'\).
Therefore, \(\psi'\) is satisfiable if and only if \(S'\) is nonempty.
By combining this with our earlier observation that the 3SAT problem \(\phi\) is satisfiable if and only if \(\psi'\) is satisfiable, it follows that the 3SAT problem \(\phi\) is satisfiable if and only if \(S'\) is nonempty.
Hence, deciding if \(S'\) is nonempty is NP-hard.

We now discuss our decision procedure.
To have a chance of there being a decision procedure, the set portion \(S\) of an MD-sentence \((\sigma_1, \ldots, \sigma_k; S)\) must be tractable.
We now give a simple, natural choice for the set portions.
A \emph{rational interval} is a subset of \([0, 1]\) that is of one of the four forms \((a, b)\), \([a, b]\), \((a, b]\), or \([a, b)\), where \(a\) and \(b\) are rational numbers.
Let us say that a sentence \((\sigma_1, \ldots, \sigma_k; S)\) is \emph{interval-based} if \(S\) is of the form \(S_1 \times \cdots \times S_k\), where each \(S_i\) is a union of a finite number of rational intervals.
If each \(S_i\) is the union of at most \(N\) rational intervals, then we say that the sentence is \emph{\(N\)-interval-based}.
Note that this interval-based sentence \((\sigma_1, \ldots, \sigma_k; S)\) is equivalent to the set \(\{(\sigma_1; S_1), \ldots, (\sigma_k; S_k)\}\) of sentences with only one component each.
This observation may be useful in implementing a decision procedure.
In fact, although we do not make use of this in the decision procedure described in this section, we so use it in the implementation of the decision procedure described later, since these sentences with a single component are easy to deal with.   (This is one of  several ways that our implementation differs from what is described in this section.)

Let \(\Gamma = \{\gamma_1, \ldots, \gamma_n\}\).
For \(1 \leq i \leq n\), assume that \(\gamma_i\) is \((\sigma^i_1, \ldots, \sigma^i_{k_i}; S_i)\), and let \(\Gamma_i = \{\sigma^i_1, \ldots, \sigma^i_{k_i}\}\).
Assume that \(\gamma\) is \((\sigma^0_1, \ldots, \sigma^0_{k_0}; S_0\)), and let \(\Gamma_0 = \{\sigma^0_1, \ldots, \sigma^0_{k_0}\}\).
Let \(G\) be the closure of \(\Gamma_0 \cup \Gamma_1 \cup \cdots \cup \Gamma_n\) under subformulas.
If \(|G| \leq M\), then we say that the pair \((\Gamma, \gamma)\) \emph{has nesting depth at most \(M\)}.

\begin{theorem}\label{thm:decision}
Assume either 
 {\L}ukasiewicz logic or G\"odel logic, with the connective \(\amp\), \(\veebar\), \(\rightarrow\), and \(\neg\).\footnote{%
        For {\L}ukasiewicz logic, we could allow each of strong and weak disjunction and conjunction, respectively, as described in an earlier footnote.
    }
Assume that \(\Gamma \cup \{\gamma\}\) is interval based.   Then there is an algorithm that determines whether \(\Gamma \models \gamma\). 
    Assume that \(\Gamma\) has at most \(P\) sentences, each sentence in \(\Gamma \cup \{\gamma\}\) is \(N\)-interval based, and \((\Gamma, \gamma)\) has nesting depth at most \(M\).  If  \(M\) is fixed, then the algorithm runs in time polynomial in \(P\) and \(N\).

\end{theorem}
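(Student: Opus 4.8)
The plan is to reduce the entailment question $\Gamma\models\gamma$ to finitely many \emph{satisfiability} questions about sets of interval-based sentences, and then to decide each of those by exploiting the piecewise-linearity of the {\L}ukasiewicz and G\"odel connectives to turn it into a bounded family of linear-programming feasibility problems. For the first reduction, recall that $\Gamma\models\gamma$ iff $\Gamma\cup\{\neg\gamma\}$ has no model, and by Lemma~\ref{lem:neg} the sentence $\neg\gamma$ is $(\sigma^0_1,\ldots,\sigma^0_{k_0},[0,1]^{k_0}\setminus S_0)$. Writing $S_0=S_{0,1}\times\cdots\times S_{0,k_0}$ with each $S_{0,i}$ a union of at most $N$ rational intervals, we have $[0,1]^{k_0}\setminus S_0=\bigcup_{i=1}^{k_0}[0,1]^{i-1}\times([0,1]\setminus S_{0,i})\times[0,1]^{k_0-i}$, and since the complement in $[0,1]$ of a union of at most $N$ rational intervals is a union of at most $N+1$ rational intervals, each term here is interval-based and defines an interval-based sentence $\delta_i$ having the same components as $\gamma$. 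A model satisfies $\neg\gamma$ iff it satisfies some $\delta_i$, so $\Gamma\models\gamma$ iff the interval-based set $\Gamma\cup\{\delta_i\}$ is unsatisfiable for every $i\in\{1,\ldots,k_0\}$. As $k_0\le M$, this is a bounded number of checks, and since $\delta_i$ shares the components of $\gamma$, adjoining it does not enlarge the subformula closure $G$, which still satisfies $|G|\le M$.

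Next I would characterize satisfiability of an interval-based set $\Gamma'$ whose components all lie in a subformula-closed set $G=\{\tau_1,\ldots,\tau_q\}$ with $q\le M$. A model is precisely an assignment of $[0,1]$-values to the atomic propositions occurring in $G$; by the induction on formula structure used in Lemma~\ref{lem:min}, such an assignment extends uniquely to a tuple $(v_1,\ldots,v_q)\in[0,1]^q$ with $v_m=f_\alpha(v_i,v_j)$ whenever $\tau_m=\tau_i\alphaop\tau_j$ and $v_j=f_\rho(v_i)$ whenever $\tau_j=\rho\tau_i$, and conversely every tuple obeying these connective equations arises from a unique model. Hence $\Gamma'$ is satisfiable iff there is $(v_1,\ldots,v_q)\in[0,1]^q$ satisfying (i) all the connective equations just listed, and (ii) for each sentence $(\sigma_1,\ldots,\sigma_k,T_1\times\cdots\times T_k)$ of $\Gamma'$ the membership $v_l\in T_j$ where $\tau_l=\sigma_j$. (This is the same feasibility question one obtains, in the language of Section~\ref{sec:axioms}, by merging all sentences of $\Gamma'$ via Rules~\eqref{perm},~\eqref{add},~\eqref{intersect} into a single sentence with component set $G$ and then applying Rule~\eqref{operators}: by Lemmas~\ref{lem:equiv} and~\ref{lem:min} the result is a logically equivalent \emph{minimized} sentence, which is satisfiable iff its set-part is nonempty.)

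It remains to decide that feasibility. For both logics with $\land,\lor,\Rightarrow,\neg$ every $f_\alpha$ and $f_\rho$ is piecewise linear: $\max\{0,v_i+v_j-1\}$, $\min\{1,v_i+v_j\}$, $\min\{1,1-v_i+v_j\}$, $1-v_i$ in the {\L}ukasiewicz case, and $\min\{v_i,v_j\}$, $\max\{v_i,v_j\}$, the G\"odel implication ($1$ if $v_i\le v_j$, else $v_j$), and the G\"odel negation ($1$ if $v_i=0$, else $0$) in the G\"odel case. Thus each equation in (i) is equivalent to a disjunction of at most two systems of linear equalities and inequalities (the linear pieces of the graph of $f_\alpha$ or $f_\rho$; for the G\"odel implication and negation one piece carries a \emph{strict} inequality, as the graph is not closed), and each membership in (ii) is a disjunction of at most $N$ pairs of linear inequalities. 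Distributing the disjunctions, feasibility of the whole system is equivalent to feasibility of at least one of at most $2^{q}N^{c}$ conjunctive systems of linear (possibly strict) inequalities in $q$ variables, where $c\le(P+1)M$ bounds the total number of components over $\Gamma'$; each such system is solved in time polynomial in the bit-size of the rational endpoints by linear programming (strict inequalities handled in the standard way, e.g.\ by maximizing a slack variable), and is feasible over the rationals whenever it is feasible. Running this over the $k_0\le M$ sentences $\delta_i$ yields the algorithm, proving the first claim; and when $M$, $N$, $P$ are fixed, the quantities $q$, $c$, $k_0$, and $2^{M}N^{(P+1)M}$ are all constants, so the procedure runs in polynomial time.

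The main obstacle is not any single step but the need to combine the two hypotheses: the excerpt already shows that, without a nesting bound, even deciding nonemptiness of the set produced by Rule~\eqref{operators} is NP-hard, so one must use the interval-based restriction to keep every intermediate set a finite union of products of rational intervals \emph{and} the bound $|G|\le M$ to cap the number of connective occurrences --- and hence of case splits --- so that piecewise-linearity collapses the problem to polynomially many linear programs. The two places most deserving of care are verifying that the complementation in the reduction to satisfiability stays within the interval-based class, and encoding the G\"odel implication and negation with exactly the right strict inequalities so that the linear-programming-feasible points correspond precisely to genuine models.
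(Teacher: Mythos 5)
Your proposal is correct and follows essentially the same route as the paper's proof: reduce entailment to unsatisfiability of \(\Gamma \cup \{\neg\gamma\}\), decompose the complement of the query's interval-based set into a bounded union of interval-based sentences (one per component, using Lemma~\ref{lem:neg} and the fact that the complement of \(N\) intervals is at most \(N+1\) intervals), reduce satisfiability to feasibility of the connective equations over the subformula closure together with the interval memberships, and case-split on the linear pieces of the {\L}ukasiewicz/G\"odel connectives and on the interval choices to get a constant-bounded number of linear programs. The only differences are bookkeeping ones --- you complement \(\gamma\) on its own \(k_0\) components rather than on the extended \(p\)-component form, and you enumerate interval choices per sentence instead of first intersecting into a single \(PN\)-interval-based sentence --- neither of which changes the argument or the complexity bound.
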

\begin{proof}
Assume throughout the proof that \(\Gamma\) has at most \(P\) sentences, each sentence in \(\Gamma \cup \{\gamma\}\) is \(N\)-interval based, and \((\Gamma, \gamma)\) has nesting depth at most \(M\).

It is easy to see that \(\Gamma \models \gamma\) if and only \(\Gamma \cup \{\neg\gamma\}\) is not satisfiable.
So we need only give an algorithm that decides whether \(\Gamma \cup \{\neg\gamma\}\) is satisfiable.

Let \(\{\sigma_1, \ldots, \sigma_p\}\) be the closure of \(\Gamma \cup \{\gamma\}\) under subformulas.
Let \(\Gamma = \{\gamma_1, \ldots, \gamma_n\}\).
By making use of Rules~(\ref{perm}) and (\ref{add}), for each \(i\) with \(1 \leq i \leq n\), we can create a sentence \(\gamma'_i\) of the form \((\sigma_1, \ldots, \sigma_p; S^i)\) that by Lemma~\ref{lem:equiv}  is equivalent to \(\gamma_i\), and that has \(\sigma_1, \ldots, \sigma_p\) as components.
By the construction, each \(\gamma_i'\) is \(N\)-interval-based.

Similarly, create the sentence \(\gamma'\) of the form \((\sigma_1, \ldots, \sigma_p; T)\) that is equivalent to \(\gamma\), and that has \(\sigma_1, \ldots, \sigma_p\) as components.
As before, \(\gamma'\) is \(N\)-interval-based.

Now \(\Gamma\) is equivalent to the conjunction of the sentences \(\gamma'_i\) for \(1 \leq i \leq n\), and this conjunction is equivalent to \((\sigma_1, \ldots, \sigma_p; S)\), where \(S = \bigcap_{i \leq n} S^i\).
We now show that \((\sigma_1, \ldots, \sigma_p; S\)) is \(PN\)-interval-based.
By assumption, for each \(i\) with \(1 \leq i \leq n\), we have that \(S^i\) is of the form \(S^i_1 \times \cdots \times S^i_p\), where each \(S^i_j\) is the union of at most \(N\) intervals.
For each \(j\) with \(1 \leq j \leq p\), let \(S_j = \bigcap_i S^i_j\).
Then \(S = S_1 \times \cdots \times S_p\).
So to show that \((\sigma_1, \ldots, \sigma_p; S\)) is \(PN\)-interval-based, we need only show that each \(S_j\) is the union of at most \(PN\) intervals.

Since \(S_j = \bigcap_{i \leq n} S^i_j\), where each \(S^i_j\) is the union of at most \(N\) intervals, we see that \(S_j\) is the union of intervals where the left endpoint of each interval in \(S_j\) is one of the left endpoints of intervals in \(\bigcup_{i \leq n} S^i_j\).
For each \(j\), there are \(n\) sets \(S^i_j\).
And for each \(i\) with \(1 \leq i \leq n\), there are at most \(N\) left endpoints of \(S^i_j\).
So the total number of left endpoints of intervals in \(\bigcup_{i \leq n} S^i_j\) is at most \(nN \leq PN\), and so the number of intervals in \(S_j\) is at most \(PN\).
Since \(S = S_1 \times \cdots \times S_p\), it follows that \((\sigma_1, \ldots, \sigma_p; S)\) is \(PN\)-interval-based.

Let us now consider \(\neg\gamma\), which is equivalent to \(\neg\gamma'\).
Recall that \(\gamma'\) is \((\sigma_1, \ldots, \sigma_p; T)\), and that \(\gamma'\) is \(N\)-interval-based.
So \(T\) is of the form \(T_1 \times \cdot \times T_p\), where each \(T_j\) is the union of at most \(N\) intervals.
By Lemma~\ref{lem:neg}, the negation of \(\gamma'\) is \((\sigma_1, \ldots, \sigma_p; \tilde{T})\), where \(\tilde{T}\) is the set difference \([0, 1]^p \setminus T\).
For each \(j\) with \(1 \leq j \leq p\), let \(T'_j\) be the set difference \([0, 1] \setminus T_j\).
Clearly, \(T'_j\) is the union of intervals.
The left endpoints of intervals in \(T'_j\) are the right-end points of intervals in \(T_j\), possible along with 0.
So \(T'_j\) is the union of at most \(N + 1\) intervals.
Let \(V_j = [0, 1]^{j-1} \times T'_j \times [0, 1]^{p-j}\).
It is straightforward to see that \(\tilde{T} = \bigcup_{j \leq p} V_j\).

Now, showing that \(\Gamma \cup \{\neg\gamma\}\) is not satisfiable is equivalent to showing that \((\sigma_1, \ldots, \sigma_p; S) \land (\sigma_1, \ldots, \sigma_p; \tilde{T})\) is not satisfiable, which is equivalent to showing that for every \(j\) with \(1 \leq j \leq p\), we have that \((\sigma_1, \ldots, \sigma_p; S) \land (\sigma_1, \ldots, \sigma_p; V_j)\) is not satisfiable.
So we need only give an algorithm for deciding if \((\sigma_1, \ldots, \sigma_p; S) \land (\sigma_1, \ldots, \sigma_p; V_j)\) is satisfiable.
Let us hold \(j\) fixed.
Since, as we showed, \((\sigma_1, \ldots, \sigma_p; S)\) is \(PN\)-interval-based, we can write \(S\) as \(S_1 \times \cdots \times S_p\), where each \(S_i\) is the union of at most \(PN\) intervals.
Now \((\sigma_1, \ldots, \sigma_p; S) \land (\sigma_1, \ldots, \sigma_p; V_j)\) is equivalent to \((\sigma_1, \ldots, \sigma_p; S \cap V_j)\).
Now \(S \cap V_j\) is of the form \(S'_1 \times \cdots \times S'_p\), where \(S'_m = S_m\) for \(m \neq j\), and where \(S'_j = S_j \cap T'_j\).
We showed that \(T'_j\) is the union of at most \(N + 1\) intervals, and that \(S_j\) is the union of at most \(PN\) intervals, so it follows that \(S_j \cap T'_j\) is the union of at most \(PN + N + 1\) intervals, since each left endpoint of the intervals in \(S_j \cap T'_j\) is a left endpoint of an interval in \(S_j\) or an interval in \(T'_j\).

We now describe our algorithm for deciding if the sentence \((\sigma_1, \ldots, \sigma_p; S \cap V_j)\), that is, for the sentence \((\sigma_1, \ldots, \sigma_p; S'_1 \times \cdots \times S'_p\)), which is \((PN+N+1)\)-interval-based, is satisfiable.
This can be broken into \(|S'_1| \times \cdots \times |S'_p|\) subproblems, one for each choice \((I_1, \ldots, I_p)\) of a single interval \(I_k\) from \(S'_k\) for each \(k\) with \(1 \leq k \leq p\).
This gives a total of at most \((PN+N+1)^M\) subproblems.
For each of these subproblems, we wish to decide satisfiability of the system \(\{s_1 \in I_1, \ldots, s_p \in I_p\}\) along with (a) the binary constraints \(f_{\alpha}(s_i, s_j) = s_m\) when \(\sigma_m\) is \(\sigma_i \alphaop \sigma_j\) and \(\alpha\) is a \(\amp\), \(\veebar\), or \(\Rightarrow\), and (b) \(f_{\neg}(s_i) = s_j\) when \(\sigma_j\) is \(\neg\sigma_i\).

The constraints \(s_j \in I_j\) are specified by inequalities (for example, if \(I_j\) is \((a, b]\) we get the inequalities \(a < s_i \leq b\)).
We now show how to deal with the constraints in (a) and (b) above.
A canonical example is given by dealing with \(f_{\amp}(s_i, s_j) = s_m\) in G\"odel logic, which interprets ``\(f_{\amp}(s_i, s_j) = s_m\)'' as \(\min \{s_i, s_j \} = s_m\).
We split the system of constraints into two systems of constraints, one where we replace \(\min\{s_i, s_j \} = s_m\) by the two statements ``\(s_i \leq s_j\), \(s_i = s_m\)'' and another where we replace \(\min\{s_i, s_j \} = s_m\) by the two statements ``\(s_j < s_i\), \(s_j = s_m\)''.
In {\L}ukasiewicz logic, where \(f_{\amp}(s_i, s_j)\) is \(\max\{0, s_1 + s_2 - 1\}\), we split the system of constraints into two systems of constraints, one where we replace \(\max\{0, s_1 + s_2 - 1\} = s_m\) by the two statements ``\(s_i + s_j - 1 \geq 0\), \(s_i + s_j - 1 = s_m\)'' and another where we replace \(\max\{0, s_1 + s_2 - 1\} = s_m\) by the two statements ``\(s_i + s_j - 1 < 0\), \(s_m = 0\)''.
The same approach works for the other binary connectives.
For example, in G\"odel logic, where \(f_{\Rightarrow}(s_i, s_j)\) is 1 if \(s_i \leq s_j\) and is \(s_j\) otherwise, we would split into two case, one where we replace \(f_{\Rightarrow}(s_i, s_j) = s_m\) by the two statements ``\(s_i \leq s_j\), \(s_m = 1\)'' and another where we replace \(f_{\Rightarrow}(s_i, s_j) = s_m\) by the two statements ``\(s_j > s_i\), \(s_m = s_j\)''.
In considering the effect of the constraints in (a) and (b), each of our at most \((PN+N+1)^M\) subproblems splits at most \(2^p \leq 2^M\) times, giving a grand total of at most \((PN+N+1)^M 2^M\) systems of inequalities that we need to check for feasibility (that is, to see if there is a solution).
For each of these systems of inequalities, we can make use a polynomial-time algorithm for linear programming
to decide feasibility,
where the size of each of these systems is linear in $M$, and so the running time for  each instance of the linear programming algorithm is polynomial in $M$.
Since also the number of systems is at most \((PN+N+1)^M 2^M\), and since\(M\) is fixed  by assumption, this gives us an overall algorithm for decidability, whose rulnning time is polynoimial in $N$ and $P$.
\end{proof}

The reason we held the parameter \(M\) fixed is that the running time of the algorithm is exponential in $M$, because there are an exponential number of calls to a linear programming subroutine.
The algorithm is polynomial-time if there is a fixed bound on $M$.
Such a bound is necessary, because the problem 
can be co-NP hard, for the following reason.

Let \(\gamma\) be the sentence \((A, \neg A; [1] \times [1])\).
Then \(\gamma\) is not satisfiable.
Let \(\Gamma\) consist of the single sentence \(\psi\) from the beginning of the section.
Then \(\Gamma \models \gamma\) if and only if \(\psi\) is not satisfiable.
Now \(\psi\) is satisfiable if and only if \(S'\) from the beginning of the section is nonempty, which we showed is an NP-hard problem to determine.
Since \(\Gamma \models \gamma\) if and only if \(\psi\) is not satisfiable, it follows that deciding if \(\Gamma \models \gamma\) is co-NP hard.

We now give an implementation of the decisoin procedure
The decision procedure described in Section~\ref{sec:decision} is available under the \texttt{socratic-logic} subdirectory provided with this supplementary material.
We implemented the algorithm as a Python package named \texttt{socratic}, which requires Python 3.6 or 3.7 and makes use of IBM\textsuperscript{\textregistered} ILOG\textsuperscript{\textregistered} CPLEX\textsuperscript{\textregistered} Optimization Studio V12.10.0 via the \texttt{docplex} Python package.

\subsection{Source code organization}

The source code is organized as follows:
\begin{description}[font=\ttfamily,leftmargin=!,labelwidth=1.25in]
    \item[setup.sh]           A script to create a Python \texttt{virtualenv} and install required packages
    \item[requirements.txt]   A standard \texttt{pip} list of package dependencies
    \item[socratic/theory.py] Implementations for theories, sentences, and bounded intervals
    \item[socratic/op.py]     Implementations for each logical operator as well as propositions and truth value constants
    \item[socratic/demo.py]   Two example use cases demonstrating the use of the package
    \item[socratic/test.py]   A suite of unit tests also serving as our experimental setup
    \item[socratic/hajek.py]  Many tautologies proved in \cite{hajek1998metamathematics} used in \texttt{test.py}
    \item[socratic/clock.py]  A higher-order function to measure the runtime of experiments
\end{description}
The classes defined in the source code are:
\begin{description}[font=\ttfamily,leftmargin=.5\leftmargin]
    \item[theory.Theory]\ \\A collection of sentences that can test for satisfiability or the entailment of a query sentence under a given logic
    \item[theory.Sentence]\ \\A base-class for a collection of formulas and an associated set of candidate interpretations for the formulas
    \begin{description}[font=\ttfamily,leftmargin=.5\leftmargin]
        \item[theory.SimpleSentence]\ \\A single formula and an associated collection of candidate truth value intervals for the formula
    \end{description}
    \item[theory.FloatInterval]\ \\An open or closed interval of truth values
    \begin{description}[font=\ttfamily,leftmargin=.5\leftmargin]
        \item[theory.ClosedInterval]\ \\\([l, u]\), i.e., all values from $l$ to $u$, inclusive
        \begin{description}[font=\ttfamily,leftmargin=.5\leftmargin]
            \item[theory.Point]\ \\\([x, x]\), i.e., just $x$
        \end{description}
        \item[theory.OpenInterval]\ \\\((l, u)\), i.e., all values from $l$ to $u$, exclusive
        \item[theory.OpenLowerInterval]\ \\\((l, u]\), i.e., all values from $l$ to $u$ excluding $l$
        \item[theory.OpenUpperInterval]\ \\\([l, u)\), i.e., all values from $l$ to $u$ excluding $u$
    \end{description}
    \item[op.Formula]\ \\The base-class of a data structure representing the syntax tree of a logical formula
    \begin{description}[font=\ttfamily,leftmargin=.5\leftmargin,itemindent=!,labelwidth=.75in]
        \item[op.Prop] A named proposition, e.g., \(x\)
        \item[op.Constant] A truth value constant, e.g., \(.5\)
        \item[op.Operator] A base-class for all formulas with subformulas
        \begin{description}[font=\ttfamily,leftmargin=!,labelwidth=.75in]
            \item[op.And] Strong conjunction \(x \otimes y\)
            \item[op.WeakAnd] Weak conjunction \(x \amp y = \min\{x, y\}\)
            \item[op.Or] Strong disjunction \(x \otimes y\)
            \item[op.WeakOr] Weak disjunction \(x \veebar y = \max\{x, y\}\)
            \item[op.Implies] Implication \(x \Rightarrow y\), i.e., the residuum of \(\otimes\)
            \item[op.Not] Negation defined \(\neg x = (x \Rightarrow 0)\)
            \item[op.Inv] Involute negation \({\sim}x = 1 - x\)
            \item[op.Equiv] Logical equivalence defined \((x \equiv y) = ((x \Rightarrow y) \otimes (y \Rightarrow x))\)
            \item[op.Delta] The operation \(\Delta x = 1\) if \(x = 1\) else 0
        \end{description}
    \end{description}
\end{description}

\subsection{Implementation details}

The implementation strategy closely adheres to the decision procedure described in Section~\ref{sec:decision}, though with a few notable design shortcuts.

\paragraph{Boolean variables.}
One such shortcut is the use of mixed integer linear programming (MILP) to perform the ``spliting'' of linear programs into two possible optimization problems, specifically by adding a Boolean variable that determines which of a set of constraints must be active.
For example, given the desired constraint \(z = \min\{x, y\}\), one may write
\begin{align}
    z & \leq x \\
    z & \leq y \\
    z & \geq x - (1 - b) \label{eqn:z-eq-x} \\
    y & \geq x - (1 - b) \label{eqn:y-ge-x} \\
    z & \geq y - b \label{eqn:z-eq-y} \\
    x & \geq y - b \label{eqn:x-ge-y}
\end{align}
for Boolean variable \(b\).
For \(x, y, z \in [0, 1]\), observe that (\ref{eqn:z-eq-x}) and (\ref{eqn:y-ge-x}) are effectively disabled for \(b = 0\) and that (\ref{eqn:z-eq-y}) and (\ref{eqn:x-ge-y}) are likewise disabled for \(b = 1\).
For example, when \(b = 1\), the remaining constraints are \(z \leq x, z \leq y, z \geq x, y \geq x\), which is equivalent to \(z = x, x \leq y\), as desired.
Observe then that MILP's exploration of either value for the Boolean variable is equivalent to repeating linear optimization for either possible set of constraints; no feasible solution exists for any combination of such Boolean variables in exactly the case that none of the split linear programs are feasible.
In practice, CPLEX has built-in support for min, max, abs, and a handful of other useful functions, though the above technique is still required to implement G\"odel logic's residuum, negation, and equivalence operations as well as to select the specific intervals a sentence's formula truth values lie within.

\paragraph{Strict inequality.}
The described decision procedure also occasionally calls for continuous constraints with strict inequality, in particular when dealing with the complements of closed intervals, but also when handling input open intervals or the G\"odel residuum, \((x \Rightarrow y) = y\) if \(x > y\) else 1.
Linear programming, however, does not inherently support this.
To implement strict inequality constraints, we introduce a global gap variable \(\delta \in [0, 1]\) to widen the distance between either side of the inequality, e.g.
\begin{equation}
    x \geq y + \delta, \label{eqn:gap}
\end{equation}
and then seek to maximize \(\delta\).
If optimization yields an apparently feasible solution but with \(\delta = 0\), we regard it as infeasible because at least one strict inequality constraint could not be honored strictly.
Again in practice, due to floating-point imprecision, MILP can sometimes return tiny though nonzero values of \(\delta\) even for \(x = y\) in (\ref{eqn:gap}); as a result, it is necessary to check if \(\delta\) is greater than some threshold rather than merely nonzero.
We use \(\delta > 10^{-8}\), which is much larger than the imprecision we have observed and yet much smaller than most truth values we consider.
We observe that this technique is roughly equivalent to replacing \(\delta\) with \(10^{-8}\) throughout, which has the added benefit of freeing up the optimization objective for other uses in future extensions of the decision procedure, such as determining the tightest bounds for which a theory can entail a query.

\paragraph{Simple sentences.}
We additionally observe that, for theories restricted to interval-based sentences, it is sufficient to support only sentences containing a single formula and collection of truth value intervals, i.e., \texttt{SimpleSentences} of the form \((\sigma; S)\) for a single formula~\(\sigma\).
This is because of the following theorem:

\begin{theorem}
    Any interval-based sentence~\(s = (\sigma_1, \ldots, \sigma_k; S_1 \times \cdots \times S_k)\) is equivalent to a collection of simple sentences~\(s_1, \ldots, s_k\), each given \(s_i = (\sigma_i; S_i)\).
\end{theorem}
\begin{proof}
    Given interval-based sentence~\(s\) and simple sentences~\(s_1, \ldots, s_k\) as described, one may apply Rules~(\ref{add}) and (\ref{perm}) to obtain \(s_1', \ldots, s_k'\) given \(s_i' = (\sigma_1, \ldots, \sigma_k; [0, 1]^{i-1} \times S_i \times [0, 1]^{k-i})\).
    One may then repeatedly apply Rule~(\ref{intersect}) to compose these exactly into \(s\).
    Likewise, one may apply Rules~(\ref{perm}) and (\ref{proj}) to obtain each \(s_i\) directly from \(s\).
    Hence, the two forms are equivalent.
\end{proof}

Accordingly, \texttt{socratic} implements only \texttt{SimpleSentence}.
In order to include an interval-based sentence in a theory, one should instead include each of its component simple sentences as constructed above.
In order to test the entailment of an interval-based sentence, one should separately test the entailment of each of its component simple sentences.

\paragraph{Complementary intervals.}
As a last deviation from the described decision procedure, rather than explicitly finding the complement of a collection of truth value intervals for a given query formula, we simply adjust how constraints are expressed to force feasible solutions into the set of complementary intervals.
Specifically, while the usual constraints require the formula's truth value to lie within some one of its intervals, the complementary constraints require the formula's truth value to not lie in any of its intervals, i.e., to lie to the left or the right of each of its intervals.
We then reverse the direction of each interval's lower and upper bound constraints, adding or removing gap variable~\(\delta\) as appropriate to switch between strict and nonstrict inequalities, and introduce Boolean parameters to decide which of each pair of constraints should apply, i.e., to decide whether the formula's truth value should lie to the left or to the right.
For example, if simple sentence has truth value intervals \([.2, .3], (.5, 1]\), the above would produce constraints
\begin{align*}
    x & \leq .2 - \delta + (1 - b_1), \\
    x & \geq .3 + \delta - b_1, \\
    x & \leq .5 + (1 - b_2), \\
    x & \geq 1 + \delta - b_2
\end{align*}
Observe that the last of these cannot be satisfied for \(\delta > 0\) unless \(b_2 = 1\), which is consistent with the complement of \((.5, 1]\) not having a right side in the interval~\([0, 1]\).

\subsection{Experimental results}

We tested \texttt{socratic} in three different experimental contexts:
\begin{itemize}
    \item 3SAT and higher \(k\)-SAT problems which become satisfiable if any one of their input clauses is removed
    \item 82 axioms and tautologies taken from H\'ajek in \cite{hajek1998metamathematics}, some of which hold only for one of {\L}ukasiewicz or G\"odel logic
    \item A formula that is classically valid but invalid in both {\L}ukasiewicz and G\"odel logic, unless propositions are constrained to be Boolean
    \item A stress test running \texttt{socratic} on sentences with thousands of intervals
\end{itemize}
All experiments are conducted on a MacBook Pro with a 2.9 GHz Quad-Core Intel Core i7, 16 GB 2133 MHz LPDDR3, and Intel HD Graphics 630 1536 MB running macOS Catalina 10.15.5.

\paragraph{\(k\)-SAT.}
We construct classically unsatisfiable \(k\)-SAT problems of the form
\begin{equation}
    (x_1 \amp \neg x_1) \veebar \cdots \veebar (x_k \amp \neg x_k)
\end{equation}
which, after CNF conversion, yields for 3SAT
\begin{align*}
    & (x_1 \veebar x_2 \veebar x_3), &
    & (\neg x_1 \veebar x_2 \veebar x_3), &
    & (x_1 \veebar \neg x_2 \veebar x_3), \\
    & (x_1 \veebar x_2 \veebar \neg x_3), &
    & (x_1 \veebar \neg x_2 \veebar \neg x_3), &
    & (\neg x_1 \veebar x_2 \veebar \neg x_3), \\
    & (\neg x_1 \veebar \neg x_2 \veebar x_3), &
    & (\neg x_1 \veebar \neg x_2 \veebar \neg x_3)
\end{align*}
and similarly for larger \(k\).
The removal of any one clause in such a problem renders it satisfiable.
We observe that, when each clause is required to have truth value exactly 1 but propositions are allowed to have any truth value, \texttt{socratic} correctly determines the problem to be
\begin{enumerate}[label=\arabic*)]
    \item unsatisfiable in G\"odel logic,
    \item satisfiable in G\"odel logic when dropping any one clause,
    \item trivially satisfiable in {\L}ukasiewicz logic with, e.g., \(x_i = .5\),
    \item again unsatisfiable in {\L}ukasiewicz logic when propositions are required to have truth values in range either \(\left[0, \frac{1}{k}\right)\) or \(\left(\frac{k - 1}{k}, 1\right]\),
    \item and yet again satisfiable in {\L}ukasiewicz logic with constrained propositions when dropping any one clause.
\end{enumerate}
We observe that G\"odel logic is much slower than {\L}ukasiewicz logic as implemented in \texttt{socratic}, likely because it performs mins and maxes between many arguments throughout while {\L}ukasiewicz logic instead performs sums with simpler mins and maxes serving as clamps to the \([0, 1]\) range.
Interestingly, the difference between unsatisfiable and satisfiable in G\"odel logic is significant; while the satisfiable problems have one fewer clause, this is more likely explained by \texttt{socratic} finding a feasible solution quickly.
On the other hand, the unsatisfiable and satisfiable problems (with constrained propositions) take roughly the same amount of time for {\L}ukasiewicz, though the trivially satisfiable problem is quicker.
The apparent exponential increase in runtime is partially explained by the fact that each larger problem has twice as many clauses, but runtime appears to be growing by slightly more than a factor of 2 per each \(k\).

\begin{table}[tbh]
    \centering
    \caption{%
        \(k\)-SAT runtimes in seconds for \texttt{socratic} with different experimental configurations.
        The five columns pertain to items 1 through 5 above.
        The problem is unsatisfiable in classical and G\"odel logic, satisfiable in G\"odel logic after removing a clause at random, trivially satisfiable in {\L}ukasiewicz logic with, e.g., \(x_i = .5\), unsatisfiable in {\L}ukasiewicz logic if propositions are required to lie in ranges sufficiently close to 0 and 1, and again satisfiable in {\L}ukasiewicz logic with constrained propositions when removing a clause at random.
    }
    \begin{tabular}{r|rrrrr}
    \toprule
            & G\"odel           & G\"odel          & {\L}uka.   & {\L}uka.   & {\L}uka. \\
        \(k\) & unsat.     & satisf.      & trivial          & unsat.    & satisf. \\
    \midrule
          3 &    .012           &   .011           &   .014           &   .019           &   .014 \\
          4 &    .022           &   .020           &   .022           &   .031           &   .033 \\
          5 &    .054           &   .043           &   .041           &   .047           &   .043 \\
          6 &    .121           &   .107           &   .064           &   .104           &   .098 \\
          7 &    .204           &   .255           &   .173           &   .167           &   .206 \\
          8 &    .404           &   .414           &   .273           &   .286           &   .308 \\
          9 &    .861           &   .881           &   .507           &   .539           &   .554 \\
         10 &   5.46\phantom{0} &  1.99\phantom{0} &  1.03\phantom{0} &  1.11\phantom{0} &  1.17\phantom{0} \\
         11 &  18.0\phantom{00} &  4.34\phantom{0} &  2.09\phantom{0} &  2.44\phantom{0} &  2.21\phantom{0} \\
         12 &  33.3\phantom{00} & 10.9\phantom{00} &  4.36\phantom{0} &  5.06\phantom{0} &  5.01\phantom{0} \\
         13 & 119\phantom{.000} & 25.8\phantom{00} &  8.72\phantom{0} & 12.4\phantom{00} & 12.3\phantom{00} \\
         14 & 696\phantom{.000} & 71.0\phantom{00} & 18.4\phantom{00} & 38.0\phantom{00} & 35.6\phantom{00} \\
    \bottomrule
    \end{tabular}
    \label{tab:k-sat}
\end{table}

\paragraph{H\'ajek tautologies.}
H\'ajek lists many axioms and tautologies pertaining to a system of logic he describes as basic logic (BL), consistent with any t-norm logic, as well as a number of tautologies specific to {\L}ukasiewicz and G\"odel logic, all of which should have truth value exactly 1.
We implement these tautologies in \texttt{socratic} and test whether the empty theory can entail them with truth value 1 in their respective logics.
The BL tautologies are divided into batches pertaining to specific operations and properties:
\begin{description}[font=\ttfamily,leftmargin=!,labelwidth=1.125in]
    \item[axioms] 8 tests, e.g., \((\phi \Rightarrow \psi) \Rightarrow ((\psi \Rightarrow \chi) \Rightarrow (\phi \Rightarrow \chi))\)
    \item[implication] 3 tests, e.g., \(\phi \Rightarrow (\psi \Rightarrow \phi)\)
    \item[conjunction] 6 tests, e.g., \((\phi \otimes (\phi \Rightarrow \psi)) \Rightarrow \psi\)
    \item[weak\char`_conjunction] 7 tests, e.g., \((\phi \amp \psi) \Rightarrow \phi\)
    \item[weak\char`_disjunction] 7 tests, e.g., \(\phi \Rightarrow (\phi \veebar \psi)\)
    \item[negation] 8 tests, e.g., \(\phi \Rightarrow (\neg\phi \Rightarrow \psi)\)
    \item[associativity] 6 tests, e.g., \((\phi \amp (\psi \amp \chi)) \Rightarrow ((\phi \amp \psi) \amp \chi)\)
    \item[equivalence] 9 tests, e.g., \(((\phi \equiv \psi) \otimes (\psi \equiv \chi)) \Rightarrow (\phi \equiv \chi)\)
    \item[distributivity] 8 tests, e.g., \((\phi \otimes (\psi \veebar \chi)) \equiv ((\phi \otimes \psi) \veebar (\phi \otimes \chi))\)
    \item[delta\char`_operator] 3 tests, e.g., \(\Delta \phi \equiv \Delta (\phi \otimes \phi)\)
\end{description}
In addition, there are logic-specific batches of tautologies:
\begin{description}[font=\ttfamily,leftmargin=!,labelwidth=1.125in]
    \item[lukasiewicz] 12 tests, e.g., \(\neg\neg\phi \equiv \phi\)
    \item[godel] 5 tests, e.g., \(\phi \Rightarrow (\phi \otimes \phi)\)
\end{description}
Each of the above BL batches complete successfully for both logics and each of the logic-specific batches complete for their respective logics and, as expected, fail for the other logic.
The runtime of individual tests are negligible; the entire test suite of 82 tautologies run on both logics complets in just 2.911 seconds.

\paragraph{Boolean logic.}
We consider a formula \(\sigma\) defined
\begin{equation}
    (\phi \Rightarrow \psi) \Rightarrow ((\neg\phi \Rightarrow \psi) \Rightarrow \psi)
\end{equation}
which is valid in classical logic but is not entailed with truth value 1 by the empty theory in either {\L}ukasiewicz or G\"odel logic.
Conversely, constraining propositions~\(\phi\) and \(\psi\) to have classical truth values by introducing the sentences
\begin{align*}
    & (\phi; [0, 0] \cup [1, 1]), \\
    & (\psi; [0, 0] \cup [1, 1])
\end{align*}
into the theory succeeds in entailing \(\sigma\) in either logic.
Indeed, if even one of these sentences is added, \(\sigma\) is entailed, but no looser intervals around 0 and 1 can entail \(\sigma\) if both propositions are non-Boolean.
In the other direction, {\L}ukasiewicz logic with unconstrained propositions entails the sentence \((\sigma; [.5, 1])\), i.e., \(\sigma\) with truth value bounded above .5, while G\"odel logic with unconstrained propositions cannot entail \(\sigma\) with any interval tighter than \([0, 1]\).
As a final example of the interaction between truth value intervals, G\"odel logic entails \((\sigma; [t, 1])\) for a lower bound truth value~\(t\) if either of \(\phi\) or \(\psi\) is constrained in the theory to have set of candidate truth values \(\{0\} \cup [t, 1]\).

\paragraph{Stress test.}
We consider the experimental configuration for Boolean logic above now with query \((\sigma; S)\) for \(S\) consisting of 10000 open intervals \((\frac{1}{k+1}, \frac{1}{k})\) for \(k\) from 2 to 10000 plus the closed interval \([.5, 1]\) and \((\phi; S')\) and \((\psi; S')\) for \(S'\) consisting of 10000 open intervals \((1 - \frac{1}{k}, 1 - \frac{1}{k+1})\) plus the closed interval \([0, 0]\).
We observe the runtime of \texttt{socratic} to be just 11.8 seconds for G\"odel logic and 9.38 seconds for {\L}ukasiewicz logic.
If we instead use closed intervals throughout, measured runtimes are 17.4 seconds for G\"odel and 9.29 seconds for {\L}ukasiewicz.

\section{Dealing with weights}\label{sec:weights}

In some circumstances, such as logical neural networks \cite{riegel2020logical}, weights are assigned to subformulas, where the weight is intended to reflect the influence, or importance, of the subformula.
Each weight is a real number.
For example, in the formulas \(\sigma_1 \veebar \sigma_2\), the weight \(w_1\) might be assigned to \(\sigma_1\) and the weight \(w_2\) assigned to \(\sigma_2\).
If \(0 < w_1 = 2 w_2\), this might indicate that \(\sigma_1\) is twice as important as \(\sigma_2\) in evaluating the value of \(\sigma_1 \veebar \sigma_2\).

As an example of a possible way to incorporate weights, assume that we are using {\L}ukasiewicz real-valued logic, where the value of \(\sigma_1 \veebar \sigma_2\) is \(\min\{1, s_1 + s_2\}\), when \(s_1\) is the value of \(\sigma_1\) and \(s_2\) is the value of \(s_2\).
If the weights of \(\sigma_1\) and \(\sigma_2\) are \(w_1\) and \(w_2\), respectively, and if both \(w_1\) and \(w_2\) are non-negative, then we might take the value of \(\sigma_1 \veebar \sigma_2\) in the presence of these weights to be \(\min\{1, w_1 s_1 + w_2 s_2\}\).

We now show how to incorporate weights into our approach.
In fact, the ease of incorporating weights and still getting a sound and complete axiomatization is a real advantage of our approach!

To deal with weights, we define an expanded view of what a formula is, defined recursively.
Each atomic proposition is a formula.
If \(\sigma_1\) and \(\sigma_2\) are formulas, \(w_1\) and \(w_2\) are weights, and \(\alpha\) is a binary connective (such as \(\amp\)) then \((\sigma_1 \alphaop \sigma_2, w_1, w_2)\) is a formula.
Here \(w_1\) is interpreted as the weight of \(\sigma_1\) and \(w_2\) as the weight of \(\sigma_2\) in the formula \(\sigma_1 \alphaop \sigma_2\).
Also, if \(\sigma\) is a formula, \(w\) is a weight, and \(\rho\) is a unary connective (such as \(\neg\)), then \((\rho\sigma, w)\) is a formula, where \(w\) is interpreted as the weight of \(\sigma\).
We modify our definition of \emph{subformula} as follows.
The subformulas of \((\sigma_1 \alphaop \sigma_2, w_1, w_2)\) are \(\sigma_1\) and \(\sigma_2\), and the subformula of \((\rho\sigma, w)\) is \(\sigma\). %

If \(\alpha\) is a binary connective, then \(f_{\alpha}\) now has four arguments, rather than two.
Thus, \(f_{\alpha}(s_1, s_2, w_, w_2)\) is the value of the formula \((\sigma_1 \alphaop \sigma_2, w_1, w_2)\) when the value of \(\sigma_1\) is \(s_1\), the value of \(\sigma_2\) is \(s_2\), the weight of \(\sigma_1\) is \(w_1\), and the weight of \(\sigma_2\) in \(w_2\).
Also, \(f_{\rho}\) now has two arguments rather than one.
Thus, \(f_{\rho}(s, w)\) is the value of the formula \((\rho\sigma, w)\) when the value of \(\sigma\) is \(s\), and the weight of \(\sigma\) is \(w\).

The axiom and rules are just as before, except that Rule~(\ref{operators}) is changed to:
\begin{equation}\label{operators1}
    \mbox{From } (\sigma_1, \ldots, \sigma_k; S) 
    \mbox{ infer } (\sigma_1, \ldots, \sigma_k; S')
\end{equation}
where \(S' = \{(s_1, \ldots, s_k)\colon (s_1, \ldots, s_k, \in S\) and (a) \(s_m = f_{\alpha}(s_i, s_j, w_1, w_2)\) when \(\sigma_m\) is \((\sigma_i \alphaop \sigma_j, w_1, w_2)\) and \(\alpha\) is a binary connective, and (b) \(s_j = f_{\rho}(s_i, w) \) when \(\sigma_j\) is \((\rho\sigma_i, w)\) and \(\rho\) is a unary connective\(\}\).

We can extend Theorem~\ref{thm:soundcomplete} (soundness and completeness) and Theorem~\ref{thm:robust} (closure under Boolean combinations) to deal with our sentences \((\sigma_1, \ldots, \sigma_k; S)\) that include weights.
The proofs go through just as before, where we use Rule~(\ref{operators1}) instead of Rule~(\ref{operators}).
Thus, we obtain the following theorems.

\begin{theorem}\label{thm:soundcompleteweights}
    Our axiom system where Rule~(\ref{operators}) is replaced by Rule~(\ref{operators1}) is sound and complete for sentences of the form \((\sigma_1, \ldots, \sigma_k; S)\) that include weights.
\end{theorem}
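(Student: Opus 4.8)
The plan is to observe that the entire development of Sections~\ref{sec:axioms}--\ref{sec:completeness} is insensitive to the presence of weights, because a weighted formula \((\sigma_1 \alphaop \sigma_2, w_1, w_2)\) is, from the point of view of models and sentences, nothing more than an ordinary compound formula whose truth function happens to be the \emph{fixed} binary map \((s_1,s_2) \mapsto f_\alpha(s_1,s_2,w_1,w_2)\); the weights are part of the syntax and never vary within a given proof. Concretely, I would first note that a model still assigns values only to atomic propositions, that the value of any (possibly weighted) formula in a model is still determined by a straightforward recursion on formula structure, and that the notions of satisfaction, \(\Gamma \models \gamma\), and \(\Gamma \infers \gamma\) are literally unchanged. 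I would also note that the subformula closure of a finite \(\Gamma \cup \{\gamma\}\) remains finite: only finitely many weighted connectives occur in \(\Gamma \cup \{\gamma\}\), so peeling off subformulas terminates at the atomic propositions exactly as before.

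Next I would re-examine each ingredient of the proof of Theorem~\ref{thm:soundcomplete} in turn. Lemma~\ref{lem:neg} is untouched, since negating a sentence only complements the set component \(S\) and never refers to the formulas. For soundness, Lemma~\ref{lem:equiv} goes through verbatim with Rule~\eqref{operators1} in place of Rule~\eqref{operators}: Rule~\eqref{operators1} merely deletes from \(S\) those tuples \((s_1,\ldots,s_k)\) that violate a functional constraint of the form \(f_\alpha(s_i,s_j,w_1,w_2)=s_m\) (or \(f_\rho(s_i,w)=s_j\)) forced by a weighted subformula relationship among the components, and every value-tuple arising from a model already satisfies all such constraints, so no model is lost; soundness of the axiom~\eqref{in01} and of Rules~\eqref{perm}, \eqref{add}, \eqref{intersect}, \eqref{proj}, \eqref{superset} does not mention connectives at all.

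The one place where the shape of \(f_\alpha\) actually enters is Lemma~\ref{lem:min}, so this is the step I would treat most carefully, and I expect it to be the main (though still routine) obstacle. Its proof is an induction on formula structure: given \((s_1,\ldots,s_k)\in S'\) in the conclusion of Rule~\eqref{operators1}, we read off an assignment to the atomic propositions among the components and let \(M\) be the resulting model, then show by induction that each component \(\sigma_m\) has value \(s_m\) in \(M\). The inductive step for a weighted component \(\sigma_m=(\sigma_i\alphaop\sigma_j,w_1,w_2)\) now invokes clause~(a) of Rule~\eqref{operators1}, which guarantees \(s_m=f_\alpha(s_i,s_j,w_1,w_2)\); by the induction hypothesis \(\sigma_i\) and \(\sigma_j\) have values \(s_i\) and \(s_j\) in \(M\), so by the definition of the value of a weighted formula \(\sigma_m\) has value \(f_\alpha(s_i,s_j,w_1,w_2)=s_m\), and similarly for unary weighted subformulas via clause~(b). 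Hence the conclusion of Rule~\eqref{operators1} is minimized, exactly as in the unweighted case. The only subtlety to keep in mind is that distinct weight vectors yield distinct formulas, so the standing convention that the components \(\sigma_1,\ldots,\sigma_k\) are pairwise distinct must be read with weights included; this causes no difficulty.

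With Lemmas~\ref{lem:neg}, \ref{lem:equiv}, and \ref{lem:min} re-established for Rule~\eqref{operators1}, the completeness argument of Theorem~\ref{thm:soundcomplete} is reproduced word for word: enumerate the finite subformula closure \(\{\sigma_1,\ldots,\sigma_p\}\) of \(\Gamma\cup\{\gamma\}\) (including all atomic propositions), use Rules~\eqref{perm}, \eqref{add}, \eqref{intersect} and Lemma~\ref{lem:equiv} to combine \(\Gamma\) into a single equivalent sentence \((\sigma_1,\ldots,\sigma_p,S)\), apply Rule~\eqref{operators1} to obtain an equivalent minimized sentence \((\sigma_1,\ldots,\sigma_p,S')\) whose set component is, by Lemma~\ref{lem:min}, exactly the set of value-tuples of models of \(\Gamma\), then project and permute down to the components of \(\gamma\) using Rules~\eqref{proj} and \eqref{perm}, use \(\Gamma\models\gamma\) together with minimization to conclude that the projected set is contained in the set component of \(\gamma\), and finish with Rule~\eqref{superset}. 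The conceptual content is simply that weights are inert constants as far as the axiomatic machinery is concerned, so no genuinely new argument is needed beyond the bookkeeping just described.
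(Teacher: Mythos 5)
Your proposal is correct and matches the paper's approach exactly: the paper's proof of Theorem~\ref{thm:soundcompleteweights} consists precisely of the observation that "the proofs go through just as before, where we use Rule~\eqref{operators1} instead of Rule~\eqref{operators}," treating weights as fixed syntactic constants so that each weighted connective occurrence determines a fixed truth function. Your elaboration — in particular isolating Lemma~\ref{lem:min} as the only step where the form of \(f_\alpha\) actually enters and re-running its structural induction with the weighted clauses — is just a more explicit writing-out of the same argument.
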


\begin{theorem}\label{thm:robust1}
    The sentences \((\sigma_1, \ldots, \sigma_k; S)\) that include weights are closed under Boolean combinations.
\end{theorem}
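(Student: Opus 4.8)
The plan is to mimic, essentially verbatim, the proof of Theorem~\ref{thm:robust} for the unweighted case, checking that every ingredient it relies on survives the passage to weighted formulas. Concretely, that proof rests on three facts: (i) Lemma~\ref{lem:neg}, which identifies the negation of \((\sigma_1,\ldots,\sigma_k,S)\) with \((\sigma_1,\ldots,\sigma_k,[0,1]^k\setminus S)\); (ii) the ability to rewrite any sentence into an equivalent one over a prescribed, enlarged list of components, by first applying Rule~\eqref{add} to adjoin the missing components and then Rule~\eqref{perm} to put them in a common order, where Lemma~\ref{lem:equiv} guarantees that these steps preserve logical equivalence; and (iii) the observation that once two sentences share the same component list \(\rho_1,\ldots,\rho_n\) --- say they are \((\rho_1,\ldots,\rho_n,S^{(1)})\) and \((\rho_1,\ldots,\rho_n,S^{(2)})\) --- a model \(M\) satisfies \((\rho_1,\ldots,\rho_n,S^{(i)})\) exactly when the tuple of values of \(\rho_1,\ldots,\rho_n\) in \(M\) lies in \(S^{(i)}\), so that their conjunction is equivalent to \((\rho_1,\ldots,\rho_n,S^{(1)}\cap S^{(2)})\) and their disjunction to \((\rho_1,\ldots,\rho_n,S^{(1)}\cup S^{(2)})\), both again sentences of our form since \(S^{(1)}\cap S^{(2)}\) and \(S^{(1)}\cup S^{(2)}\) are subsets of \([0,1]^n\). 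An induction on the structure of an arbitrary Boolean combination, pushing complements through unions and intersections via Lemma~\ref{lem:neg}, then finishes the argument.

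I would then verify each ingredient for weighted formulas. Ingredient (i) is immediate: the proof of Lemma~\ref{lem:neg} uses only the definition of when a model satisfies a sentence of the form \((\sigma_1,\ldots,\sigma_k,S)\), and that definition is unchanged --- a weighted formula still has a well-defined value in every model, obtained by the recursion through the weighted connective functions \(f_\alpha(s_1,s_2,w_1,w_2)\) and \(f_\rho(s,w)\). Ingredient (iii) likewise refers only to the semantics of sentences, not to the semantics of the connectives, and so is unaffected by weights. Ingredient (ii) requires the weighted analogue of Lemma~\ref{lem:equiv}; but Rules~\eqref{perm} and \eqref{add} are literally the same rules as before --- they never mention connective values --- and their equivalence proofs in Lemma~\ref{lem:equiv} are purely set-theoretic, hence carry over unchanged. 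Insofar as the full proof of Theorem~\ref{thm:robust} also invokes the minimization machinery from the proof of Theorem~\ref{thm:soundcomplete}, e.g.\ Lemmas~\ref{lem:min} and \ref{lem:equiv} applied to Rule~\eqref{operators}, one simply replaces those appeals by their weighted counterparts: these hold because Theorem~\ref{thm:soundcompleteweights} has already established soundness and completeness in the weighted setting, which in particular yields the logical equivalence of the premise and conclusion of Rule~\eqref{operators1} and the weighted form of Lemma~\ref{lem:min}.

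The only point needing genuine care, and hence the main obstacle, is bookkeeping rather than mathematics: one must confirm that the weight annotations are truly inert for every lemma imported from the unweighted development --- in particular that the modified subformula relation (``the subformulas of \((\sigma_1\alphaop\sigma_2,w_1,w_2)\) are \(\sigma_1\) and \(\sigma_2\)'') interacts with the component-padding step exactly as before, and that no part of the proof of Theorem~\ref{thm:robust} secretly used the specific connective functions rather than merely their existence as well-defined maps on \([0,1]^2\times\mathbb{R}^2\) and \([0,1]\times\mathbb{R}\). Since upon inspection every such dependence is only on the fact that the value of a formula in a model is determined by the model --- which remains true for weighted formulas by construction --- there is no new obstruction, and the proof goes through as the remark preceding the theorem asserts.
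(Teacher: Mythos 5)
Your proposal is correct and matches the paper's own treatment, which simply asserts that the proof of Theorem~\ref{thm:robust} (pad components via Rules~\eqref{perm} and~\eqref{add}, take intersections/unions of the set parts, and negate via Lemma~\ref{lem:neg}) goes through verbatim with Rule~\eqref{operators1} in place of Rule~\eqref{operators}, since every ingredient depends only on the fact that a (weighted) formula has a well-defined value in each model. One small quibble: the weighted analogue of Lemma~\ref{lem:min} is an ingredient \emph{of} the proof of Theorem~\ref{thm:soundcompleteweights} (established by the same induction on formula structure), not a consequence of it, but since Theorem~\ref{thm:robust}'s argument only needs Lemmas~\ref{lem:neg} and~\ref{lem:equiv}, this does not affect correctness.
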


What about the decision procedure in Theorem~\ref{thm:decision}?
The key step 
is the use of a polynomial-time algorithm for linear programming.
If we were to hold the weights \(w_i\) as fixed rational constants, and if the weighting functions were linear (such as \(w_1 s_1 + w_2 s_2\)), possibly including a min or a max, then we could use linear programming, and the decision procedure would go through in the presence of weights.

\section{Related work}\label{sec:related}
Rosser \cite{rosser60} comments on the possibility of considering formulas whose value is guaranteed to be at least $\alpha$.  For example, in
{\L}ukasiewicz) logic, if we consider weak disjunction  $\underline{\veebar}$, where $f_{\underline{\veebar}}(s_,s_2) = \max(s_1,s_2)$, then  the real value of 
$A \underline{\veebar} \neg A$ is always at least 0.5,  since $f_{\neg}(s) = 1-s$.
But Rosser rejects this approach, since he notes that there are uncountably many choices for $\alpha$, but only countably many recursively enumerable sets (and an axiomatization would give a recursively enumerable set  of valid formulas). 

Belluci \cite{belluce64}
investigates when the set of formulas with values at least $\alpha$ is recursively
enumerable.
Font et al.~\cite{fgtv06} consider the question of what they call ``preservation of degrees of truth''. 
They give a method for deciding, for a fixed $\alpha$,  if $\sigma$ having a value at least $\alpha$ implies that 
$\phi$ has value  at least  $\alpha$.

Nov{\'a}k \cite{Novak2015} considered a logic with sentences that assign a real value to each formula of first-order real-valued logic.  Thus, using our notation, his sentences would be of the form $(\phi; \{\alpha\})$, where $\phi$ is a formula in first-order real-valued logic, and $\alpha$ is a single real value.  He gave a sound and complete axiomatization. 
 
An interesting logic is the rational Pavelka logic RPL, an expansion of the standard {\L}ukasiewicz logicwhere rational truth-constants are allowed in formulas. For example, if $r$ is a rational number, then the formula $r \rightarrow \phi$ says that the value of $\phi$ is at least $r$, and the formula $\phi \rightarrow r$ says that the value of $\phi$ is at most $r$. Therefore, this logic can express the MD-sentences $(\phi; S)$, when $S$ is the union of a finite number of closed intervals. However, it cannot express strict inequalities.  For example, it cannot express that the value of $\phi$ is strictly greater than 0.5.\footnote{This follows from the stronger fact that if $A_1, \ldots, A_r$ are the atomic propositions, $\phi$ is a formula, and $G$ is the set of all value assignments to the atomic propositions that give $\phi$ the truth value 1,then since the operators of standard {\L}ukasiewicz logic are continuous (and so the value of $\phi$ is a continuous function of the value of the atomic propositions), it follows that $\{(g(A_1), \cdots, g(A_r)): \ g \in G\}$ is a closed subset of $[0,1]^r$. Note that if $r = 0.5$, then even though the formula $A \rightarrow r$ has the value 1 when the value $a$ of $A$ is at most 0.5,  the negation $\neg ( A \rightarrow r)$ does not have the value 1  when $a > 0.5$; instead it has the value $a$ - 0.5.} 
RPL was introduced by Hájek in \cite{hajek1998metamathematics} as a simplification of the system proposed by Pavelka in \cite{pavelka79} in which the syntax contained a truth-constant for each real number of the interval [0,1]. Hájek showed that an analogous logic could be presented as an expansion of {\L}ukasiewicz propositional logic with truth-constants only for the rational numbers in [0,1] and gave a corresponding completeness theorem. Moreover, first-order fuzzy logics with real or rational constants have also been deeply studied starting from Novák’s extension of Pavelka’s logic to a first-order predicate language in \cite{Novak:CompletenessFirstOrder} (see e.g.~\cite{Esteva-Godo-Noguera}).

Each of \cite{beavers1993automated}, \cite{mundici1994constructive} and \cite{hahnle1994many} give decision procedures that partially cover the situation we allow in Section~\ref{sec:decision}. The former two support  only {\L}ukasiewicz logic. The third, like our decision procedure, works for a variety of logics, though it is explicitly established in \cite{hahnle1994many} that their approach does not support discontinuous operators. Accordingly, unlike our decision procedure, their approach does not work for G{\"o}del logic given its discontinuous $\rightarrow$ operator.

\section{Conclusions}

We give a sound and strongly complete axiomatization for sentences about real-valued formulas. By being parameterized, our axiomatization covers essentially all real-valued logics. Our axiomatization allows us to include weights on formulas.
The results give us a way to establish such properties for neuro-symbolic systems that aim or purport to perform logical inference with real values.
The algorithm described gives us a constructive existence proof and a baseline approach for well-founded inference.
Because LNNs \cite{riegel2020logical} are exactly a weighted real-valued logical system implemented in neural network form, an important immediate upshot of our results for the weighted case is that they provide provably sound and complete logical inference for LNNs.
Such a result has not previously been established for any neuro-symbolic approach to our knowledge.
While our main motivation was to pave the way forward for neuro-symbolic systems, our results are fundamental, filling a long-standing gap in a very old literature, and can be applied well beyond AI.

\acknow{%
We are very grateful to Marco Carmosino, who improved the writing in this paper by giving  us many helpful comments
We are also grateful to Guillermo Badia, 
Ken Clarkson, Didier Dubois, Phokion Kolaitis,
Carles Noguera,
and Henri Prade for helpful comments.
Finally, we are grateful to Lluis Godo for confirming the novelty of our approach, and for helpful comments.
}

\showacknow{} %

\bibliography{bib}

\begin{thebibliography}{10}

\bibitem{serafini2016logic}
L Serafini, Ad Garcez, Logic tensor networks: Deep learning and logical
  reasoning from data and knowledge.
\newblock {\em\protect\JournalTitle{arXiv preprint}} \textbf{arXiv:1606.04422}
  (2016).

\bibitem{bach2017hinge}
SH Bach, M Broecheler, B Huang, L Getoor, Hinge-loss {Markov} random fields and
  probabilistic soft logic.
\newblock {\em\protect\JournalTitle{The Journal of Machine Learning Research}}
  \textbf{18}, 3846--3912 (2017).

\bibitem{cohen2020tensorlog}
W Cohen, F Yang, KR Mazaitis, {TensorLog}: A probabilistic database implemented
  using deep-learning infrastructure.
\newblock {\em\protect\JournalTitle{Journal of Artificial Intelligence
  Research}} \textbf{67}, 285--325 (2020).

\bibitem{riegel2020logical}
R Riegel, et~al., Logical neural networks.
\newblock {\em\protect\JournalTitle{arXiv preprint}} \textbf{arXiv:2006.13155}
  (2020).

\bibitem{boole1854investigation}
G Boole, {\em An investigation of the laws of thought: on which are founded the
  mathematical theories of logic and probabilities}.
\newblock (Walton and Maberly) Vol.{}~2, (1854).

\bibitem{zadeh1975fuzzy}
LA Zadeh, Fuzzy logic and approximate reasoning.
\newblock {\em\protect\JournalTitle{Synthese}} \textbf{30}, 407--428 (1975).

\bibitem{novak2008formal}
V Nov{\'a}k, A formal theory of intermediate quantifiers.
\newblock {\em\protect\JournalTitle{Fuzzy Sets and Systems}} \textbf{159},
  1229--1246 (2008).

\bibitem{epstein1993multiple}
G Epstein, {\em Multiple-valued logic design: an introduction}.
\newblock (CRC Press), (1993).

\bibitem{fagin2003optimal}
R Fagin, A Lotem, M Naor, Optimal aggregation algorithms for middleware.
\newblock {\em\protect\JournalTitle{Journal of computer and system sciences}}
  \textbf{66}, 614--656 (2003).

\bibitem{hajek1998metamathematics}
P H{\'a}jek, {\em Metamathematics of fuzzy logic}.
\newblock (Springer Science \& Business Media) Vol.{}~4, (1998).

\bibitem{rescher1969many}
N Rescher, {\em Many-valued logic}.
\newblock (McGraw-Hill), (1969).

\bibitem{rosser60}
JB Rosser, Axiomatization of infinite valued logics.
\newblock {\em\protect\JournalTitle{Logizue et Analyse}} \textbf{3}, 137--153
  (1960).

\bibitem{belluce64}
L Belluce, Further results on infinite valued predicate logic.
\newblock {\em\protect\JournalTitle{J. Symbolic Logic}} \textbf{29}, 69--78
  (1964).

\bibitem{fgtv06}
JM Font, {\`A}J Gil, A Torrens, V Verdu, On the infinite-valued {\l}ukasiewicz
  logic that preserves degrees of truth.
\newblock {\em\protect\JournalTitle{Arch. Math. Logic}} \textbf{45}, 835--868
  (2006).

\bibitem{Novak2015}
V Nov{\'a}k, Fuzzy logic with extended syntax.
\newblock {\em\protect\JournalTitle{Handbook of Mathematical Fuzzy Logic}}
  \textbf{3}, 1063--1104 (2015).

\bibitem{pavelka79}
J Pavelka, On fuzzy logic i, ii, iii.
\newblock {\em\protect\JournalTitle{Zeitschrift fur Mathematische Logik und
  Grundlagen der Mathematik}} \textbf{29}, 45—52, 119--134, 447--464 (1979).

\bibitem{Novak:CompletenessFirstOrder}
V Nov{\'{a}}k, On the syntactico-semantical completeness of first-order fuzzy
  logic part~{I} (syntax and semantic), part~{II} (main results).
\newblock {\em\protect\JournalTitle{Kybernetika}} \textbf{26}, 47--66, 134--154
  (1990).

\bibitem{Esteva-Godo-Noguera}
F Esteva, L God, C Noguerra, First-order t-norm based fuzzy logics with
  truth-constants: distinguished semantics and completeness properties.
\newblock {\em\protect\JournalTitle{Annals of Pure and Applied Logic}}
  \textbf{161}, 185--202 (2009).

\bibitem{beavers1993automated}
G Beavers, Automated theorem proving for {\l}ukasiewicz logics.
\newblock {\em\protect\JournalTitle{Studia Logica}} \textbf{52}, 183--195
  (1993).

\bibitem{mundici1994constructive}
D Mundici, A constructive proof of {McNaughton's} theorem in infinite-valued
  logic.
\newblock {\em\protect\JournalTitle{The Journal of Symbolic Logic}}
  \textbf{59}, 596--602 (1994).

\bibitem{hahnle1994many}
R H{\"a}hnle, Many-valued logic and mixed integer programming.
\newblock {\em\protect\JournalTitle{Annals of Mathematics and Artificial
  Intelligence}} \textbf{12}, 231--263 (1994).

\end{thebibliography}

\end{document}